\documentclass[11pt]{article}

\usepackage[T1]{fontenc}
\usepackage[utf8]{inputenc}
\usepackage[margin=1in]{geometry}
\usepackage{amsmath,amssymb,amsthm,mathtools}
\usepackage{enumitem}
\usepackage{booktabs,array,graphicx}
\usepackage[hidelinks]{hyperref}

\theoremstyle{plain}
\newtheorem{theorem}{Theorem}[section]
\newtheorem{lemma}[theorem]{Lemma}
\newtheorem{proposition}[theorem]{Proposition}
\newtheorem{corollary}[theorem]{Corollary}
\theoremstyle{definition}
\newtheorem{definition}[theorem]{Definition}
\newtheorem{remark}[theorem]{Remark}

\newcommand{\V}{V}
\newcommand{\Reach}{\operatorname{\mathsf{Reach}}}
\newcommand{\cl}[1]{\lceil #1 \rceil}
\newcommand{\res}{\mathsf{R}}
\newcommand{\miss}{\operatorname{\mathsf{Miss}}}
\newcommand{\Serve}{\operatorname{\mathsf{Serve}}}
\newcommand{\restr}{\!\upharpoonright\!}
\newcommand{\NP}{\textsf{NP}}
\newcommand{\ALG}{\textup{\textsc{Alg}}}
\newcommand{\OPT}{\textup{\textsc{Opt}}}

\title{\bfseries Service-Cut Certificates for Aligned Eviction\\
in Tiered Cache Networks}

\author{
Faruk Alpay\thanks{Corresponding author: \texttt{alpay@lightcap.ai}.} \qquad Levent Sarıoğlu\\[5pt]
\normalsize Department of Computer Engineering, Bah\c{c}e\c{s}ehir University\\
\normalsize Istanbul, Turkey\\
\normalsize \texttt{faruk.alpay@bahcesehir.edu.tr, levent.sarioglu@bahcesehir.edu.tr}
}

\date{}

\begin{document}
\maketitle

\begin{abstract}
In a tiered cache, eviction is a graph decision: removing one aligned storage block can disconnect downstream demand that never addressed that block directly, so request recency alone cannot price the action.  This paper studies aligned eviction as a vertex-separation problem and gives a selection rule whose decisions carry independently checkable service-cut evidence.  For every candidate block, it computes the exact weighted downstream demand cut, rejects actions that disconnect protected demand, and selects the minimum-impact admissible eviction.  Reclamation is characterized as vertex separation: minimum-location reclamation reduces to node-capacitated flow, while minimum aligned block actions are \NP-complete.  In two-hop cache networks, one streaming pass evaluates every candidate impact; a matching adversarial construction proves that a history-only victim selector has unbounded one-step damage.  The packet-scale implementation combines a seed-indexed exact-cardinality residency structure with collision-aware, 32-bank impact counters.  Replay compression makes the result auditable: counter intervals reproduce the stream, exact monoid summaries retain every reported additive statistic, and a counting lower bound quantifies the state required by any exact all-candidate summary.  A 144-scenario evaluation processes 582.90 trillion packets (404.86~PiB of simulated payload), validates the coordinate expectations, and exposes a zero-impact extreme-value transition near $N\zeta=\log m$.  Complete impact vectors, decoded audit samples, telemetry, and logs remain within the ancillary-file budget.  Finally, invalidation is monotone replicated state: fair asynchronous delivery converges without coordination, with a diameter bound under synchronous full-edge rounds.  The architecture therefore binds capacity reclamation, path continuity, and distributed invalidation to one certifying interface.
\end{abstract}

\section{Introduction}
\label{sec:intro}

A tiered cache network couples storage placement to service reachability.  An origin retains the authoritative copy, intermediate locations hold resident replicas, and directed transfer links determine which replicas can serve downstream demand.  The near tier is capacity-limited, but its eviction unit is a fixed-size aligned block rather than an isolated object.  Reclaiming one block may therefore remove an intermediate service cut and disconnect demand far from the physical storage action.  An eviction layer that observes only request recency cannot state, let alone certify, this graph-level consequence.

Request recency alone does not measure this damage.  A block that has not been requested recently may be the only resident parent of a large demand set.  Conversely, a frequently accessed block may be redundant because the same demand has several resident parents.  The eviction problem therefore has two parts: identify the graph cut induced by a candidate block, and expose enough evidence that another implementation can check the decision.

The design objective is a certifying interface: before mutating residency, expose the exact downstream service loss, prove that protected demand remains connected, and make the resulting invalidation converge across replicas without global coordination.  The construction draws on caching networks~\cite{ioannidis,baev}, vertex connectivity~\cite{menger,schrijver}, online paging~\cite{sleatortarjan,borodin}, certifying algorithms~\cite{certifying}, and monotone replicated state~\cite{calm,crdt,delta}.

\paragraph{Contributions.}
\begin{enumerate}[leftmargin=2em,itemsep=2pt]
\item A shared reachability certificate represents the exact pre- and post-eviction service sets.  It avoids one certificate per disconnected endpoint, has linear size, and verifies in linear time for one candidate (Lemma~\ref{lem:impact}).
\item Reclamation is characterized as vertex separation.  Minimum-location reclamation is a node-capacitated flow problem, minimum aligned block actions are \NP-complete, and alignment yields an exact reclaimable-capacity threshold (Theorems~\ref{thm:menger}, \ref{thm:hardness}, and \ref{thm:granularity}).
\item A one-pass accumulator evaluates all candidate blocks simultaneously in two-hop cache networks (Theorem~\ref{thm:onepass}).  A topology-oblivious selector can have an arbitrarily large one-step damage ratio, whereas minimum impact is optimal for the current state (Theorems~\ref{thm:separation} and \ref{thm:perstep}).
\item A constant-state affine index enforces exact resident cardinality in both demand strata.  Warp grouping preserves exact impact vectors but has only collision-dependent savings, quantified by Corollary~\ref{cor:warp-sparse}; sharded counter banks remove address conflicts and yield the measured crossover gain (Theorem~\ref{thm:exact-residency} and Propositions~\ref{prop:warp}--\ref{prop:banks}).  Replay compression then makes the packet experiment compatible with a small research artifact: a seed and counter interval reproduce the stream, and an exact monoid summary preserves every reported additive metric (Theorems~\ref{thm:replay} and \ref{thm:summary-lb}).  The artifact retains full candidate vectors, decoded samples, telemetry, logs, and independent validation.
\end{enumerate}

The distributed invalidation result is stated with the scheduling assumptions made explicit: fair asynchronous delivery gives eventual convergence, while a diameter bound requires synchronous full-edge rounds.  Recoverable and irreversible eviction are handled by separate state transitions rather than conflated with cache replacement.

\section{Transfer networks and aligned eviction}
\label{sec:prelim}

\begin{definition}[Transfer network]
\label{def:net}
A \emph{transfer network} is a tuple $\mathfrak N=(\V,E,\Pi,\rho)$ where $\V$ is a finite nonempty set of \emph{locations}, $E\subseteq \V\times \V$ is a \emph{transfer relation} ($u\to v$ means a resident copy at $u$ may populate or serve $v$), and $\rho\in \V$ is the non-evictable \emph{origin}. Put $\mathcal C=\V\setminus\{\rho\}$. The alignment $\Pi$ is a partition of the cacheable locations $\mathcal C$ into blocks of size $\beta$. For $v\in\mathcal C$, let $B(v)\in\Pi$ be its block.
\end{definition}

A \emph{residency} is the set $\res\subseteq \V$ of locations holding a resident copy, with $\rho\in\res$ always. A location is \emph{served} when a resident copy reaches it from the origin over resident locations. Writing $G{\restr}X$ for the subgraph of $(\V,E)$ induced by $X$,
\[
\Reach(\rho,X)=\{v\in X : \text{there is an $E$-path from $\rho$ to $v$ all of whose nodes lie in $X$}\},
\]
with $\Reach(\rho,X)=\varnothing$ when $\rho\notin X$; the served set under $\res$ is $\Reach(\rho,\res)$. An \emph{eviction} drops residency at an aligned $S\subseteq\mathcal C$, leaving served set $\Reach(\rho,\res\setminus S)$.

\begin{definition}[Block closure, aligned set]
\label{def:closure}
For $X\subseteq \mathcal C$ put $\cl{X}=\bigcup\{B\in\Pi : B\cap X\neq\varnothing\}$, the smallest union of blocks containing $X$. A set $S\subseteq\mathcal C$ is \emph{aligned} if $S=\cl{S}$.
\end{definition}

\begin{proposition}[Block closure is a closure operator]
\label{prop:closure}
The map $\cl{\cdot}$ is extensive, monotone and idempotent, and the aligned subsets of $\mathcal C$ form a complete lattice closed under arbitrary unions and intersections.
\end{proposition}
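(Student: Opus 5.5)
The plan is to check the three closure axioms directly from the definition $\cl{X}=\bigcup\{B\in\Pi : B\cap X\neq\varnothing\}$, using only that $\Pi$ is a partition of $\mathcal C$. The lattice statement will then follow from a characterization of aligned sets as arbitrary unions of blocks.

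\emph{Extensive.} For $x\in X\subseteq\mathcal C$, the block $B(x)$ meets $X$, so $x\in B(x)\subseteq\cl{X}$.

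\emph{Monotone.} If $X\subseteq Y$, every block meeting $X$ also meets $Y$, so $\cl{X}\subseteq\cl{Y}$.

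\emph{Idempotent.} Extensivity already gives $\cl{X}\subseteq\cl{\cl{X}}$, so it suffices to show the reverse inclusion. Let $B$ be a block meeting $\cl{X}$, say at $y$. Then $y$ lies in some block $B'$ that meets $X$. Because blocks of a partition are disjoint or equal, $y\in B\cap B'$ forces $B=B'$. Hence $B$ meets $X$ and $B\subseteq\cl{X}$. This disjointness step is the only place where the partition property is essential.

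\emph{Lattice structure.} I would first prove that $S$ is aligned if and only if $S$ is a union of blocks.
\begin{itemize}
\item If $S=\cl{S}$, then $S$ is by definition a union of blocks.
\item Conversely, let $S=\bigcup\mathcal F$ with $\mathcal F\subseteq\Pi$. Any block meeting $S$ meets some $B\in\mathcal F$, so by disjointness it equals $B$. Hence $\cl{S}\subseteq S$, and extensivity gives equality.
\end{itemize}
With this characterization, closure under the two set operations is immediate:
\begin{itemize}
\item An arbitrary union of unions of blocks is again a union of blocks.
\item For intersections, a block $B$ and a union of blocks $\bigcup\mathcal F$ satisfy $B\cap\bigcup\mathcal F=B$ if $B\in\mathcal F$ and $\varnothing$ otherwise, again by disjointness. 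Therefore $\bigcap_i S_i$ is the union of the blocks common to every $\mathcal F_i$.
\end{itemize}
The empty family needs a convention. The empty union is $\varnothing$, which is aligned. The empty intersection is taken inside $\mathcal C$, giving $\mathcal C=\bigcup\Pi$, which is also aligned.

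The aligned sets thus form a family closed under arbitrary unions and intersections, ordered by inclusion. Joins are unions and meets are intersections, so the family is a complete lattice; it is in fact isomorphic to the powerset of $\Pi$ via $\mathcal F\mapsto\bigcup\mathcal F$. Alternatively, completeness follows from the standard fact that the fixed points of a closure operator form a complete lattice.

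I expect no real obstacle. The only point needing care is the repeated use of block disjointness, in both idempotence and the intersection step.
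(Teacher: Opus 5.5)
Your proposal is correct and follows the same route as the paper's proof. The closure axioms come straight from the definition, idempotence holds because a block meeting $\cl{X}$ already meets $X$ by block disjointness, and the lattice claim comes from characterizing aligned sets as unions of blocks; you just supply details the paper leaves implicit, namely the intersection identity and the empty-family conventions.
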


\begin{proof}
Extensivity and monotonicity are immediate; idempotence holds because a block meeting $\cl{X}$ already meets $X$. A subset is aligned exactly when it is a union of blocks, and unions and intersections of unions of blocks are unions of blocks, giving a complete sublattice of $(2^{\mathcal C},\subseteq)$.
\end{proof}

Let $T\subseteq \mathcal C$ be the \emph{target region} to reclaim and put $P=\mathcal C\setminus T$.

\paragraph{Bounded near tier and protection.} A subset $F\subseteq \mathcal C$ is the \emph{near tier}, holding at most $\kappa$ resident copies, and a subset of cacheable locations is \emph{protected}, hence exempt from eviction. These explain why reclamation is undertaken, namely to keep $|\res\cap F|\le\kappa$, and which evictions are admissible, but they do not enter the basic separation equivalence.

\section{Aligned reclamation, collapse, and granularity}
\label{sec:collapse}

The \emph{direct} model has no intermediaries, so a location is served exactly when resident.  Section~\ref{sec:cert} restores routing.

\begin{definition}[Reclamation, direct model]
\label{def:reclaim-direct}
In the direct model an eviction $S\subseteq\mathcal C$ \emph{reclaims} $T$ if $T\subseteq S$, and is \emph{aligned} if $S=\cl{S}$.
\end{definition}

\begin{theorem}[Reclamation lattice]
\label{thm:lattice}
The aligned reclamations of $T$ are exactly the aligned supersets of $T$. They are closed under intersection and form a principal filter in the lattice of aligned sets with least element $\cl{T}$. Hence $\cl{T}$ is the unique minimum aligned reclamation, and its \emph{over-eviction} is $\cl{T}\setminus T=\{v\notin T : B(v)\cap T\neq\varnothing\}$, the content sharing a block with the target region.
\end{theorem}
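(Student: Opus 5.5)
The plan is to unwind Definition~\ref{def:reclaim-direct} and then lean on Proposition~\ref{prop:closure}. By definition, an aligned reclamation of $T$ is a set $S\subseteq\mathcal C$ with $S=\cl{S}$ and $T\subseteq S$. So the first claim, that aligned reclamations are exactly the aligned supersets of $T$, is a restatement of the definition and needs no further argument.

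For closure under intersection, I take any nonempty family $\{S_i\}$ of aligned supersets of $T$. Each $S_i$ contains $T$, so $\bigcap_i S_i\supseteq T$. By Proposition~\ref{prop:closure}, aligned sets are closed under arbitrary intersections, so $\bigcap_i S_i$ is aligned. For the principal filter claim, I note that $T\subseteq\mathcal C$ gives $T\subseteq\cl{T}$ by extensivity, and $\cl{T}$ is aligned by idempotence, so $\cl{T}$ is itself an aligned reclamation. Conversely, if $S$ is aligned and $T\subseteq S$, then monotonicity gives $\cl{T}\subseteq\cl{S}=S$. Together these show that the set of aligned reclamations equals $\{S \text{ aligned} : \cl{T}\subseteq S\}$, which is the principal filter generated by $\cl{T}$. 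Its least element $\cl{T}$ is the unique minimum under inclusion, because any other minimum would have to both contain and be contained in $\cl{T}$. The same inclusion also makes $\cl{T}$ minimum in cardinality, since every competitor contains it.

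For the over-eviction formula, I compute elementwise. A vertex $v\in\mathcal C$ lies in $\cl{T}$ iff the unique block containing $v$, namely $B(v)$, meets $T$; this uses that $\Pi$ is a partition. Removing $T$ then gives $\cl{T}\setminus T=\{v\notin T: B(v)\cap T\neq\varnothing\}$, with $v$ ranging over $\mathcal C$.

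None of these steps is a real obstacle. The only point needing care is the uniqueness of the block containing $v$, which comes from $\Pi$ being a partition, together with the convention that "minimum" is meant with respect to inclusion (it then holds for cardinality as a consequence).
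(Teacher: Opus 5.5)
Your proposal is correct and matches the paper's proof essentially step for step: you unwind the definition, use monotonicity and idempotence to get $\cl{T}\subseteq\cl{S}=S$, observe that $\cl{T}$ is itself an aligned superset of $T$, and invoke Proposition~\ref{prop:closure} for closure under intersection. Your elementwise check of the over-eviction formula via the partition property adds detail that the paper leaves implicit, but the route is the same.
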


\begin{proof}
A set is an aligned reclamation iff it is aligned and contains $T$. If $S$ is such, then $\cl{T}\subseteq\cl{S}=S$ by monotonicity and idempotence; conversely $\cl{T}$ is aligned and contains $T$, hence an aligned reclamation. The collection is $\{S:\cl{T}\subseteq S=\cl{S}\}$, a principal filter with least element $\cl{T}$, closed under intersection by Proposition~\ref{prop:closure}. The over-eviction is $\cl{T}\setminus T$.
\end{proof}

\begin{theorem}[Fragmentation collapse]
\label{thm:collapse}
In the direct model, every aligned reclamation of $T$ equals $\mathcal C$ iff $\cl{T}=\mathcal C$ iff every block meets $T$; when this holds the over-eviction is exactly $P$.
\end{theorem}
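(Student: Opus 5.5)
The plan is to reduce everything to Theorem~\ref{thm:lattice}, which identifies the aligned reclamations of $T$ with the principal filter $\{S:\cl{T}\subseteq S=\cl{S}\}$ whose least element is $\cl{T}$. I will prove the three conditions equivalent as a chain: every aligned reclamation equals $\mathcal C$, then $\cl{T}=\mathcal C$, then every block meets $T$. The over-eviction statement is then read off at the end.

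For the first equivalence, suppose first that every aligned reclamation equals $\mathcal C$. Since $\cl{T}$ is itself an aligned reclamation by Theorem~\ref{thm:lattice}, it equals $\mathcal C$. Conversely, suppose $\cl{T}=\mathcal C$, and let $S$ be any aligned reclamation. Then $\mathcal C=\cl{T}\subseteq S\subseteq\mathcal C$, so $S=\mathcal C$. This direction needs $S\subseteq\mathcal C$, which holds because evictions are subsets of $\mathcal C$ by definition.

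For the second equivalence, I unfold Definition~\ref{def:closure}: $\cl{T}$ is the union of the blocks meeting $T$. Because $\Pi$ partitions $\mathcal C$, this union equals $\mathcal C$ exactly when every block of $\Pi$ is included, that is, when every block meets $T$. For the forward direction, take any block $B$. It is nonempty, since it is a partition block of size $\beta$, so pick $v\in B$. Then $v\in\cl{T}$, so $v$ lies in some block meeting $T$, and by disjointness of the partition that block is $B$. For the backward direction, every point of $\mathcal C$ lies in some block, and that block meets $T$.

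Finally, when these conditions hold, Theorem~\ref{thm:lattice} gives the over-eviction as $\cl{T}\setminus T=\mathcal C\setminus T=P$. The only delicate step is the forward direction of the second equivalence. It uses nonemptiness and disjointness of blocks, so I will state those properties of $\Pi$ explicitly rather than leave them implicit.
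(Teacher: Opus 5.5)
Your proof is correct and follows the paper's own argument: both rely on Theorem~\ref{thm:lattice} (with $\cl{T}$ as the least aligned reclamation) for the first equivalence, unfold the definition of $\cl{\cdot}$ for the second, and read off $\cl{T}\setminus T=\mathcal C\setminus T=P$. Your version only spells out what the paper's one-line proof leaves implicit, namely $S\subseteq\mathcal C$ and that the blocks of $\Pi$ are nonempty and disjoint.
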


\begin{proof}
By Theorem~\ref{thm:lattice} the least aligned reclamation is $\cl{T}$, so all equal $\mathcal C$ iff $\cl{T}=\mathcal C$, i.e.\ every block meets $T$. Then $\cl{T}\setminus T=\mathcal C\setminus T=P$.
\end{proof}

Beyond the all-or-nothing statement, alignment fixes the exact reclaimable capacity. Let $N=|\Pi|$ be the number of blocks and, for $X\subseteq \mathcal C$, write $b(X)=|\{B\in\Pi:B\cap X\neq\varnothing\}|$ for the number of blocks $X$ touches.

\begin{theorem}[Granularity and reclaimable capacity]
\label{thm:granularity}
Let $A\subseteq \mathcal C$ be the protected (active) set.
\begin{enumerate}[label=\textnormal{(\arabic*)},leftmargin=2.4em,itemsep=1pt]
\item The minimum aligned reclamation of $T$ has size $\beta\, b(T)$.
\item At most $N-b(A)$ blocks are evictable, so at most $(N-b(A))\,\beta$ units of capacity are reclaimable.
\item No reclamation is possible, that is the system can free nothing, iff $b(A)=N$: the active set touches every block.
\end{enumerate}
\end{theorem}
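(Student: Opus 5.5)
All three parts come from one counting identity plus Theorem~\ref{thm:lattice}. Since $\Pi$ partitions $\mathcal C$ into blocks of common size $\beta$, every aligned set $S$ is a disjoint union of exactly $b(S)$ blocks, so $|S|=\beta\,b(S)$. For part~(1), Theorem~\ref{thm:lattice} identifies $\cl{T}$ as the unique minimum aligned reclamation, and it also minimizes cardinality, because every aligned reclamation contains it. By Definition~\ref{def:closure}, $\cl{T}$ is the union of precisely the blocks meeting $T$. Hence $b(\cl{T})=b(T)$ and $|\cl{T}|=\beta\,b(T)$.

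For part~(2), I will first fix the admissibility condition: an aligned eviction $S$ is admissible only if $S\cap A=\varnothing$, since protected locations are exempt. Because $S$ is a union of blocks, $S\cap A=\varnothing$ holds exactly when no block of $S$ meets $A$. So the blocks of $S$ lie among the $N-b(A)$ blocks disjoint from $A$, and
\[
|S|=\beta\,b(S)\le \beta\,(N-b(A)).
\]
The bound is attained by the union of all blocks disjoint from $A$. That union is aligned and avoids $A$, so $N-b(A)$ is the exact maximum, not merely an upper bound. This also makes the threshold exact, as the text before the theorem claims.

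For part~(3), I interpret ``free nothing'' as: the only admissible aligned eviction is $\varnothing$. By the maximality in part~(2), a nonempty admissible eviction exists iff $N-b(A)\ge1$. So no reclamation is possible iff $b(A)=N$, that is, iff $A$ meets every block. This is the protected-set analogue of Theorem~\ref{thm:collapse}. I could instead phrase it through the closure: $b(A)=N$ iff $\cl{A}=\mathcal C$.

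I do not expect a genuine mathematical obstacle. The one point needing care is definitional: stating precisely that admissibility means disjointness from $A$ and that ``capacity'' means the cardinality of the evicted set. Once those are fixed, everything else is the identity $|S|=\beta\,b(S)$ for aligned $S$.
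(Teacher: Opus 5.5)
Your proposal is correct and follows the paper's proof essentially step for step: part (1) via Theorem~\ref{thm:lattice} and the fact that $\cl{T}$ is the union of exactly the $b(T)$ blocks meeting $T$, part (2) by excluding the $b(A)$ blocks that meet $A$, and part (3) as the case where no block is evictable. Your extra observation, that the union of all blocks disjoint from $A$ attains the bound in (2), is a harmless strengthening that the paper leaves implicit.
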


\begin{proof}
(1) By Theorem~\ref{thm:lattice} the minimum aligned reclamation is $\cl{T}$, a union of exactly the $b(T)$ blocks meeting $T$, of size $\beta\,b(T)$. (2) A block containing a protected location cannot be evicted; there are $b(A)$ such blocks, leaving $N-b(A)$ evictable, hence at most $(N-b(A))\beta$ reclaimable units. (3) Reclamation frees nothing iff no block is evictable iff every block contains a protected location iff $b(A)=N$.
\end{proof}

Theorem~\ref{thm:granularity}(3) is an exact threshold: as the active footprint spreads to touch every block, reclaimable capacity falls to zero independently of how much nominal capacity remains. The coarser the alignment, the fewer blocks and the sooner $b(A)$ reaches $N$; in the limit of a single block any active location forbids all reclamation.

\section{Routed reclamation and separation certificates}
\label{sec:cert}

With intermediaries restored, evicting aligned $S\subseteq\mathcal C$ leaves served set $\Reach(\rho,\V\setminus S)$.  A location is reclaimed only when it is no longer reachable from the origin through resident locations.

\begin{definition}[Reclamation, routed model]
\label{def:reclaim-routed}
An aligned eviction $S\subseteq\mathcal C$ \emph{reclaims} $T$ if $T\cap\Reach(\rho,\V\setminus S)=\varnothing$.
\end{definition}

\begin{lemma}[Reclamation is separation]
\label{lem:sep}
$S$ reclaims $T$ iff $S$ is an \emph{$(\rho,T)$-vertex separator}: every $E$-path from $\rho$ to a location of $T$ contains a node of $S$.
\end{lemma}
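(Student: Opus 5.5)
The plan is to prove both directions by unfolding the definition of $\Reach$ with $X=\V\setminus S$. Everything then reduces to one observation: an $E$-path from $\rho$ to $t$ lies entirely in $\V\setminus S$ exactly when it contains no node of $S$. Since $S\subseteq\mathcal C$ and $\rho\notin\mathcal C$, we have $\rho\in\V\setminus S$. So the degenerate clause $\Reach(\rho,X)=\varnothing$ for $\rho\notin X$ never applies, and the served set is given by the path clause.

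For the direction ($\Rightarrow$), suppose $S$ reclaims $T$, so $T\cap\Reach(\rho,\V\setminus S)=\varnothing$. I will argue by contradiction. Take an $E$-path from $\rho$ to some $t\in T$ that contains no node of $S$. Then every node of the path lies in $\V\setminus S$. In particular $t\in\V\setminus S$, and the path witnesses $t\in\Reach(\rho,\V\setminus S)$. This contradicts reclamation, so every such path meets $S$.

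For the direction ($\Leftarrow$), suppose every $E$-path from $\rho$ to $T$ meets $S$, and let $t\in T\cap\Reach(\rho,\V\setminus S)$. By definition there is an $E$-path from $\rho$ to $t$ with all nodes in $\V\setminus S$, hence avoiding $S$. This contradicts the separator property, so the intersection is empty.

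The only point needing care is the boundary case of paths of length zero and whether $T$ may meet $S$. If $t\in T\cap S$, then every path to $t$ ends at a node of $S$, so the separator condition holds trivially for $t$; consistently, $t\notin\V\setminus S$ and hence $t$ is not served. The convention that a path's endpoints count as its nodes must be used uniformly in both directions. Also, $\rho\notin T$, since $T\subseteq\mathcal C$, so the trivial path at $\rho$ never reaches $T$. Alignment of $S$ plays no role in this lemma; it enters only in the later optimization results such as Theorem~\ref{thm:hardness}.
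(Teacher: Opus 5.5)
Your proof is correct and follows the same route as the paper: both unfold $t\in\Reach(\rho,\V\setminus S)$ into ``some $E$-path from $\rho$ to $t$ avoids $S$'' and then quantify over $t\in T$. Your remarks that $\rho\notin S$ (since $S\subseteq\mathcal C$, so the empty-set clause of $\Reach$ never applies) and that $\rho\notin T$ spell out details the paper's one-line proof leaves implicit.
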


\begin{proof}
$t\in\Reach(\rho,\V\setminus S)$ iff some $E$-path from $\rho$ to $t$ avoids $S$; so $T\cap\Reach(\rho,\V\setminus S)=\varnothing$ iff no such path exists for any $t\in T$.
\end{proof}

Write $H=G{\restr}(\V\setminus S)$.

\begin{definition}[Certificates]
\label{def:certs}
Fix $S\subseteq\mathcal C$ and a location $t\notin S$.
A \emph{residual-service certificate} for $t$ is a simple $E$-path $\rho=v_0,\dots,v_k=t$ with all $v_i\notin S$.
A \emph{separation certificate} for $t$ is a set $C\subseteq \V\setminus S$ with $t\in C$, $\rho\notin C$, \emph{backward closed in $H$}: $u\to w$ in $H$ with $w\in C$ implies $u\in C$.
\end{definition}

\begin{theorem}[Certificate dichotomy]
\label{thm:dichotomy}
For every location $t\notin S$ and every $S\subseteq\mathcal C$, exactly one holds: (a) a residual-service certificate for $t$, simple and of length at most $|\V|-1$; or (b) a separation certificate for $t$ of size at most $|\V|$.  Either certificate is verifiable in $O(|\V|+|E|)$ time in an adjacency-list representation (the path case is $O(|\V|)$ when edge membership is constant-time). Moreover $S$ reclaims $T$ iff every $t\in T$ either lies in $S$ or is in case \textnormal{(b)}.
\end{theorem}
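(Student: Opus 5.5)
The plan is to reduce everything to the reachable set $R=\Reach(\rho,\V\setminus S)$, which is a single graph search in $H$. Since $S\subseteq\mathcal C$, the origin $\rho$ lies in $H$. Fix $t\notin S$. I split into two cases according to whether $t\in R$.

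\emph{Case $t\in R$.} The BFS tree of $H$ rooted at $\rho$ contains a tree path from $\rho$ to $t$. This path is simple, so it has at most $|\V|$ nodes and length at most $|\V|-1$, and all its nodes avoid $S$. It is therefore a residual-service certificate, which gives (a).

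\emph{Case $t\notin R$.} I take $C=(\V\setminus S)\setminus R$. Then $t\in C$, $\rho\notin C$ because $\rho\in R$, and $|C|\le|\V|$. For backward closure, let $u\to w$ be an edge of $H$ with $w\in C$. If $u\notin C$, then $u\in R$, and extending a path to $u$ by the edge $u\to w$ would put $w\in R$, a contradiction. So $u\in C$, and $C$ is a separation certificate, which gives (b).

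\emph{Exclusivity.} This is the step that carries the logical content, although it is short. Suppose both a path $v_0,\dots,v_k$ and a set $C$ exist. Then $v_k=t\in C$. Backward closure applied along the path edges, all of which lie in $H$, gives $v_{k-1},\dots,v_0\in C$ by induction downward. In particular $\rho=v_0\in C$, contradicting $\rho\notin C$. Hence exactly one of (a) and (b) holds.

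\emph{Verification cost.} For a path, one checks that $v_0=\rho$ and $v_k=t$, that the $v_i$ are distinct and lie outside $S$ (a marker array), and that each consecutive pair is an edge. With constant-time edge membership this costs $O(|\V|)$. With adjacency lists, scanning the out-list of each $v_i$ once costs $O(|\V|+|E|)$ in total, since the $v_i$ are distinct.

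For a candidate set $C$, one marks the members of $C$ and checks that $C\cap S=\varnothing$, $t\in C$ and $\rho\notin C$. Then one scans every edge $u\to w$ once: whenever $u,w\notin S$ and $w\in C$, one requires $u\in C$. This takes $O(|\V|+|E|)$.

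\emph{Final equivalence.} By Lemma~\ref{lem:sep} and Definition~\ref{def:reclaim-routed}, $S$ reclaims $T$ iff $T\cap R=\varnothing$. A location $t\in T\setminus S$ avoids $R$ iff it falls in case (b), by the dichotomy just proved. A location $t\in S$ is never in $R$, because $R\subseteq\V\setminus S$. This gives the stated characterization.

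The only point needing care is that the edges of $H$ are exactly the $E$-edges between non-$S$ nodes. This must be fixed consistently in both the closure argument and the verifier.
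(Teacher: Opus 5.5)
Your proof is correct and is essentially the paper's argument: you split on whether $\rho$ reaches $t$ in $H$, exhibit either a simple residual path or a backward-closed set, and obtain exclusivity by pulling $t\in C$ back along the path until $\rho\in C$. The one difference is the witness in case (b): you use $(\V\setminus S)\setminus\Reach(\rho,\V\setminus S)$, the largest backward-closed set avoiding $\rho$, which comes from the same forward search and certifies every unreachable target at once, whereas the paper uses the set $C^{\ast}$ of vertices that reach $t$ in $H$, the smallest backward-closed set containing $t$.
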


\begin{proof}
Put $C^{\ast}=\{w\in\V\setminus S:\text{$t$ is reachable from $w$ in $H$}\}$, which contains $t$ and is backward closed. If $\rho$ does not reach $t$ in $H$, then $\rho\notin C^{\ast}$, a separation certificate verified by one edge scan; no residual path exists. If $\rho$ reaches $t$ in $H$, removing cycles gives a simple residual path; no separation certificate exists, since tracing that path backwards through any backward-closed $C\ni t$ forces $\rho\in C$. Reachability of $t$ in $H$ is definite, so exactly one case holds. The last claim also accounts for targets directly removed by $S$ and then applies Lemma~\ref{lem:sep} to the remainder.
\end{proof}

The two families are dual presentations of one boundary in the certifying-algorithm sense~\cite{certifying}: a path exhibits service, a separation set exhibits its impossibility, both checkable independently of how they arose and in linear time.

\section{Flow--cut duality and the cost of reclamation}
\label{sec:flow}

\begin{theorem}[Flow--cut duality]
\label{thm:menger}
The minimum number of cacheable locations whose eviction reclaims $T$ equals the maximum number of $\rho$--$T$ paths that are vertex-disjoint outside their common origin (their endpoints in $T$ are therefore distinct), and is computable in polynomial time.
\end{theorem}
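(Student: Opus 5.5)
The plan is to reduce the statement to the vertex form of Menger's theorem by adding one auxiliary sink. By Lemma~\ref{lem:sep}, a set $S\subseteq\mathcal C$ reclaims $T$ exactly when $S$ is a $(\rho,T)$-vertex separator. Form $G'$ from $(\V,E)$ by adding a fresh vertex $\sigma$ and an arc $t\to\sigma$ for every $t\in T$.

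First I would check that the $(\rho,T)$-separators contained in $\mathcal C$ are exactly the $\rho$--$\sigma$ vertex cuts of $G'$ that avoid $\rho$ and $\sigma$. Every $\rho$--$\sigma$ path in $G'$ is a $\rho$--$T$ path followed by one extra arc. Every $\rho$--$T$ path extends to a $\rho$--$\sigma$ path in this way. The vertices available for cutting are precisely $\V\setminus\{\rho\}=\mathcal C$. There is one point to note: a separator may contain target vertices themselves, and this is allowed because $T\subseteq\mathcal C$. Consequently a finite cut always exists, namely $S=T$. This also covers any arc $\rho\to t$ directly into the target, since $\rho$ and $\sigma$ are not adjacent. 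The minimum in the statement is therefore well defined and at most $|T|$.

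Next I would apply Menger's vertex theorem~\cite{menger,schrijver} in $G'$. It says that the minimum size of a $\rho$--$\sigma$ vertex cut equals the maximum number of internally vertex-disjoint $\rho$--$\sigma$ paths. Then I translate the path side back to $G$. Deleting $\sigma$ from internally disjoint $\rho$--$\sigma$ paths gives $\rho$--$T$ paths that share only $\rho$. Their last vertices were internal vertices of the $G'$-paths, so those endpoints are distinct, as the statement asserts. Conversely, $\rho$--$T$ paths that are disjoint outside $\rho$ extend by their final arcs into $\sigma$ to internally disjoint $\rho$--$\sigma$ paths. A path may pass through one target vertex before ending at another; this is harmless, because disjointness is a condition on all vertices other than $\rho$. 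Alternatively I can truncate each path at its first target vertex, which preserves both disjointness and the number of paths.

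For computability, and as a self-contained route to the equality if I prefer not to cite Menger directly, I would use node splitting. Replace each $v\in\mathcal C$ by an arc $v_{\mathrm{in}}\to v_{\mathrm{out}}$ of capacity $1$. Give every original arc and every arc into $\sigma$ infinite capacity (or capacity $|\V|$), and leave $\rho$ unsplit. Integral max-flow min-cut then yields the equality. A finite minimum cut consists only of split arcs, so it corresponds to a vertex set of the same size, and integrality lets the flow decompose into unit paths that are vertex-disjoint outside $\rho$. Ford--Fulkerson needs at most $|T|$ augmentations, each costing $O(|\V|+|E|)$, so the running time is $O(|T|(|\V|+|E|))$, which is polynomial.

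I expect the main obstacle to be the bookkeeping at the boundary rather than any real difficulty. Three points need care: targets may themselves be cut, which is why the sink is attached to $T$ rather than the split arcs being routed around $T$; $\rho$ must never be cut; and the endpoint-distinctness clause must follow from the sink construction. Once $G'$ is set up correctly, everything else is standard Menger.
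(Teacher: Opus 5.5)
Your proposal is correct and takes essentially the same route as the paper. You reduce via Lemma~\ref{lem:sep} to a minimum $(\rho,T)$-vertex separator that may contain targets, attach a super-sink to every $t\in T$, split each cacheable vertex into a unit-capacity arc while leaving $\rho$ unsplit, and obtain both the equality and polynomial computability from integral node-capacitated max-flow (equivalently Menger's theorem); your point that the targets become internal vertices of the extended $\rho$--$\sigma$ paths is the paper's observation that unit capacity at target vertices forces distinct endpoints.
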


\begin{proof}
A minimum reclaiming eviction is a minimum $(\rho,T)$-vertex separator (Lemma~\ref{lem:sep}), where targets themselves may be evicted. Split every $v\in\mathcal C$ into an in-node and out-node joined by a unit-capacity arc, give transfer arcs infinite capacity, and join the out-copy of each $t\in T$ to a super-sink by an infinite-capacity arc. An integral maximum flow selects paths that share only $\rho$; unit capacity at target vertices also forces distinct endpoints. Conversely, every such path family requires a distinct evicted vertex. Node-capacitated max-flow, equivalently this form of Menger's theorem~\cite{menger,schrijver}, gives the equality and a polynomial algorithm.
\end{proof}

\begin{theorem}[Hardness of minimum-action reclamation]
\label{thm:hardness}
Given a protected set $A\supseteq T$, deciding whether some aligned reclamation $S\subseteq\mathcal C\setminus A$ acts on at most $k$ blocks is \NP-complete, even when every block has the same size.
\end{theorem}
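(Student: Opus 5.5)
Membership in \NP{} is immediate. A certificate is a set of at most $k$ blocks, none of which meets $A$. Their union $S$ is aligned by construction. Reclamation can then be checked in linear time: compute $\Reach(\rho,\V\setminus S)$ by one graph search and test that it avoids $T$ (Lemma~\ref{lem:sep}, Theorem~\ref{thm:dichotomy}). Since $A\supseteq T$, targets can never be evicted themselves, so every target must be separated by intermediate blocks. This is what makes the problem combinatorially nontrivial, in contrast to the direct model of Theorem~\ref{thm:lattice}.

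For hardness I would reduce from Vertex Cover. Let $(G_0=(U,F),k)$ be an instance. Build a two-hop network as follows.
\begin{itemize}
\item For each vertex $u\in U$, create a block $B_u=\{u\}$ of size $\beta=1$, and add the arc $\rho\to u$.
\item For each edge $e=\{u,w\}\in F$, create a target location $t_e$ with arcs $u\to t_e$ and $w\to t_e$.
\item Put $T=A=\{t_e: e\in F\}$.
\end{itemize}
The only $\rho$--$t_e$ paths are $\rho\to u\to t_e$ and $\rho\to w\to t_e$. Hence an aligned $S$ avoiding $A$ reclaims $T$ iff the set of chosen vertex blocks covers every edge. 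So some admissible reclamation uses at most $k$ blocks iff $G_0$ has a vertex cover of size at most $k$.

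The remaining bookkeeping is that $\Pi$ must partition all of $\mathcal C$, so the protected targets also need blocks. I would place each $t_e$ in its own singleton block. Those blocks meet $A$, so they are never selectable. All blocks then have size $1$, which satisfies the uniform-size clause.

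The step I expect to need care is whether size-$1$ blocks count as a legitimate instance of ``even when every block has the same size''. A reader might object that with $\beta=1$ this is merely node-weighted separation. Yet Theorem~\ref{thm:menger} gives a polynomial algorithm only when each target must be separated, and targets themselves may be cut; the hardness here comes from protecting $T$. If one instead wants a uniform $\beta\ge2$ with genuine alignment effects, I would pad as follows. Each block $B_u$ gets $\beta-1$ fresh isolated dummy locations. These are unreachable from $\rho$ or lead nowhere, so they do not affect separation. Each target block gets $\beta-1$ further dummies. This preserves the equivalence, since block counts are unchanged. The reduction is polynomial, which completes \NP-completeness.
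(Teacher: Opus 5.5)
Your \NP{} membership argument is fine, but the hardness reduction does not work. In your network the two paths $\rho\to u\to t_e$ and $\rho\to w\to t_e$ are in \emph{parallel}. By Lemma~\ref{lem:sep} a reclamation must put a node of $S$ on \emph{every} $\rho$--$t_e$ path. Since $\rho$ is not evictable and $t_e\in A$, this forces both $u\in S$ and $w\in S$. Your instance therefore asks whether $G_0$ has at most $k$ non-isolated vertices, not whether it has a vertex cover of size at most $k$. Already for a single edge $\{u,w\}$ with $k=1$, $\{u\}$ is a vertex cover, yet evicting either singleton block leaves $t_e$ served through the other parent.

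The gap is more than a miswired gadget. Suppose every block contains at most one location lying on a $\rho$--$T$ path. That holds for singleton blocks, and it still holds after your padding with isolated dummies. Then minimising block actions is just a minimum $(\rho,T)$ vertex separator in which locations whose block meets $A$ are undeletable. The split-node flow of Theorem~\ref{thm:menger} solves this in polynomial time once those nodes receive infinite capacity. So protecting $T$ is not a source of hardness, and no reduction of this shape can succeed unless $\textsf{P}=\NP$.

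The missing idea is that a single block must contain locations on several different $\rho$--$T$ paths. Then one aligned action cuts many paths at once, and the minimisation becomes a covering problem. The ``or'' between alternative cuts must come from placing candidate locations in \emph{series} along a path, not in parallel. The paper reduces from \textsc{Set Cover}: each element $e$ gets a chain from $\rho$ to one protected target $t$, whose internal vertices are the occurrences $v_{e,j}$ with $e\in Q_j$. Block $j$ collects all occurrences $v_{e,j}$, and blocks are padded to a common size with isolated dummies.

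Your Vertex Cover idea can be repaired the same way. For each edge $e=\{u,w\}$ use the chain $\rho\to x_{e,u}\to x_{e,w}\to t$, and let $B_u=\{x_{e,u}: u\in e\}$, padded to the maximum degree. Put $t$ in its own padded protected block.
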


\begin{proof}
Membership follows by guessing the blocks and checking reachability.  For hardness reduce from \textsc{Set Cover}~\cite{karp}.  Given universe $U$, sets $Q_1,\ldots,Q_m$, and budget $k$, assume every element occurs in at least one set.  For each element $e$, make one directed $\rho$--$t$ chain whose internal vertices are $v_{e,j}$ for the indices $j$ with $e\in Q_j$, in any fixed order.  Different element chains share only $\rho$ and $t$.  Put all occurrence vertices $v_{e,j}$ with the same set index $j$ into one block.  Thus evicting block $j$ cuts exactly the element chains covered by $Q_j$.  Let $\beta$ be the largest block size, pad smaller blocks with isolated dummy locations, and put $t$ in a separate protected block padded to size $\beta$.  There is an aligned reclamation using at most $k$ unprotected blocks iff the corresponding sets cover every element of $U$.  The construction is polynomial and all blocks have size $\beta$.
\end{proof}

The split is exact: minimising evicted locations is a polynomial cut, while minimising eviction actions is intractable, since one action drops a whole block and minimisation becomes a covering problem. The certificate sizes are also best possible.

\begin{theorem}[Certificate lower bound]
\label{thm:lower}
There are networks with $|\V|=n$ on which every residual-service certificate has length $n-1$ and, after a singleton-block eviction, every separation certificate has size at least $n-2$. The $O(|\V|)$ bounds of Theorem~\ref{thm:dichotomy} are therefore asymptotically tight.
\end{theorem}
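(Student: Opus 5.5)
The construction is a directed path, used for both bounds at once. Take $\V=\{\rho=v_0,v_1,\dots,v_{n-1}\}$ with $E=\{v_{i-1}\to v_i : 1\le i\le n-1\}$, and use singleton blocks ($\beta=1$).

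\textbf{Residual-service certificate.} With $S=\varnothing$ and $t=v_{n-1}$, the only $E$-path from $\rho$ to $t$ is the whole chain. Any residual-service certificate for $t$ therefore has length exactly $n-1$.

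\textbf{Separation certificate, first attempt.} Evict a singleton block and then bound the backward-closed set from below. Evicting $S=\{v_1\}$ leaves $H$ equal to the path $v_2\to\cdots\to v_{n-1}$ together with the isolated vertex $\rho$. Take $t=v_{n-1}$. Backward closure in $H$ forces every $C\ni t$ to contain $v_{n-2},\dots,v_2$ by induction along the incoming edges. So $|C|\ge n-2$, and $C=\{v_2,\dots,v_{n-1}\}$ avoids $\rho$.

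\textbf{Main obstacle.} The statement says \emph{every} separation certificate, and the lower bound just given is for a fixed $t$. For a smaller target such as $t=v_2$, the certificate $\{v_2\}$ has size $1$. I will therefore read the claim as existential over targets: on this network there is a target $t$ whose every separation certificate has size at least $n-2$. If a universal-over-targets version is wanted, I would restrict $T$ to $\{v_{n-1}\}$ explicitly.

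\textbf{Tightness.} Theorem~\ref{thm:dichotomy} gives upper bounds of $|\V|-1$ on path length and $|\V|$ on separation-set size. The matching lower bounds $n-1$ and $n-2$ show both are $\Theta(|\V|)$, so the certificate bounds are asymptotically tight. Verification time is also at least linear, since the certificate must be read in full.
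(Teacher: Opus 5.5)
Your proof is correct and is the paper's argument: the directed path $\rho\to v_1\to\cdots\to v_{n-1}=t$ with $S=\varnothing$ forces a residual path of length $n-1$, and evicting the singleton block $\{v_1\}$ makes backward closure from $t$ force $v_2,\dots,v_{n-1}$ into every separation certificate, giving size at least $n-2$. Your reading of ``every separation certificate'' as ranging over certificates for the fixed far endpoint $t=v_{n-1}$ is also how the paper's proof uses it.
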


\begin{proof}
On the directed path $\rho\to v_1\to\cdots\to v_{n-1}=t$ with $S=\varnothing$, the only $\rho$--$t$ path has length $n-1$.  For the separation side, use $\rho\to w_1\to\cdots\to w_{n-1}=t$ and evict the singleton block $\{w_1\}$.  In the residual graph, backward closure forces every certificate containing $t$ to contain $w_2,\ldots,w_{n-1}$, exactly $n-2$ vertices.
\end{proof}

\section{An operational model and certifying online eviction}
\label{sec:online}

The operational layer consists of a transition system over residency, an exact witness of downstream demand lost by an eviction, and an online policy for the classical block-reference specialization.  Reachability-aware damage is a network quantity; the usual competitive paging bound applies only after projecting requests to independently cacheable blocks.

\subsection{Operational model}

Let $\mathcal D$ be a finite set of demand endpoints, disjoint from $\V$, and let $E_{\mathcal D}\subseteq\V\times\mathcal D$ contain the final service links.  For residency $\res$, define
\[
\Serve(\res)=\{d\in\mathcal D:\exists u\in\Reach(\rho,\res)\text{ with }(u,d)\in E_{\mathcal D}\}.
\]
Demand endpoints are clients, not cache slots, and therefore need not be resident.  This convention removes an ambiguity between a cache location and a requester.

The state is a pair $\sigma=(\res,\mu)$ with residency $\res\subseteq \V$ ($\rho\in\res$) and a status map $\mu:\mathcal C\to\{\textsf{res},\textsf{rec},\textsf{free}\}$ recording whether content is resident, recoverable by reload, or freed. The capacity invariant is $|\res\cap F|\le\kappa$. The transitions are
\[
\mathsf{request}(d),\quad \mathsf{evict}(B),\quad \mathsf{reload}(B),\quad \mathsf{free}(B),\quad \mathsf{invalidate}(S),
\]
where $B$ ranges over blocks and $d\in\mathcal D$. A request is a hit exactly when $d\in\Serve(\res)$; otherwise it is a service fault and the policy may repopulate a service path, evicting blocks as needed to respect capacity.  A protected demand set $\mathcal D_{\!p}$ must remain in $\Serve(\res)$.  The implementation maintains capacity, exact service queries, protection, and eventual invalidation closure.

\subsection{An exact downstream-cost witness}

Give each endpoint a nonnegative weight $w(d)$, such as its observed packet or byte count.  The decisive operational quantity is the weighted demand that an eviction disconnects.

\begin{lemma}[Impact oracle]
\label{lem:impact}
Let $\res$ be the residency and $B\in\Pi$. Define
\[
\miss_{\res}(B)=\Serve(\res)\setminus\Serve(\res\setminus B),
\qquad
I_w(B)=\sum_{d\in\miss_{\res}(B)}w(d).
\]
The set $\miss_{\res}(B)$, its weight, and a shared certificate of both the pre- and post-eviction service sets are computable and verifiable in $O(|\V|+|E|+|E_{\mathcal D}|)$ time and linear space.
\end{lemma}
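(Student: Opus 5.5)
The plan is to compute the two served sets by two graph searches and to certify them together with a single combined object. The first pass is a breadth-first search from $\rho$ in $G{\restr}\res$. It produces $R_0=\Reach(\rho,\res)$ together with a BFS parent pointer for every reached node. One scan of $E_{\mathcal D}$ then gives $\Serve(\res)$, and each served $d$ records one witness parent $u\in R_0$ with $(u,d)\in E_{\mathcal D}$. The second pass repeats this in $G{\restr}(\res\setminus B)$ and yields $R_1$ and $\Serve(\res\setminus B)$. Because $\res\setminus B\subseteq\res$, we have $R_1\subseteq R_0$ and $\Serve(\res\setminus B)\subseteq\Serve(\res)$. The set $\miss_{\res}(B)$ is then a set difference computed with a marker array. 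Its weight $I_w(B)$ is a single summation over that difference. Each pass costs $O(|\V|+|E|+|E_{\mathcal D}|)$, and the space is linear.

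The shared certificate consists of four parts:
\begin{enumerate}[leftmargin=2em,itemsep=1pt]
\item a parent forest $\pi_0$ on $R_0$ rooted at $\rho$, using only $E$-arcs inside $\res$;
\item a parent forest $\pi_1$ on $R_1$ rooted at $\rho$, using only arcs inside $\res\setminus B$;
\item one service-link witness per served endpoint in each state;
\item the two claimed sets themselves, encoded as bit vectors.
\end{enumerate}
This object replaces per-endpoint certificates. Each forest encodes all residual-service certificates of Theorem~\ref{thm:dichotomy}(a) at once, and together the forests have linear size.

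Negative membership is certified by closure rather than by per-node separation sets. In the spirit of Theorem~\ref{thm:dichotomy}(b), the complement $\res\setminus R_0$ must be backward closed in $G{\restr}\res$ with respect to arcs from $R_0$. Equivalently, no arc $u\to v$ has $u\in R_0$ and $v\in\res\setminus R_0$. The analogous condition must hold for $R_1$ inside $\res\setminus B$, and no arc of $E_{\mathcal D}$ may run from $R_i$ to a demand endpoint outside the claimed $\Serve$ set.

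Verification proceeds in four steps.
\begin{enumerate}[leftmargin=2em,itemsep=1pt]
\item Check that each forest is acyclic and reaches $\rho$. This is a linear-time check: compute depths by memoized traversal and reject on a cycle.
\item Check that every forest arc is an $E$-arc with both ends in the appropriate residency. Edge membership is established by marking adjacency during a single edge scan.
\item Scan all of $E$ and $E_{\mathcal D}$ once to confirm the closure conditions.
\item Recompute the difference and its weight.
\end{enumerate}

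Soundness follows in two parts. The forests prove $R_i\subseteq\Reach$. Closure proves the reverse inclusion, since any residual path from $\rho\in R_i$ leaving $R_i$ would use a forbidden boundary arc. This is exactly the backward-closed argument of Theorem~\ref{thm:dichotomy} applied to all unreached nodes at once. Served sets then follow from the witness links and the absence of boundary links into $\mathcal D$.

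The main obstacle is ensuring that this single closure check certifies non-service for every endpoint simultaneously within the linear budget. Theorem~\ref{thm:dichotomy} is stated per target, and naively taking one separation certificate per missed endpoint could cost quadratically. I would handle this by observing that the set $C=\res\setminus R_i$ contains every unreached location and does not contain $\rho$. It is a valid separation certificate for every one of them, provided the boundary condition holds, which is precisely the scanned condition. The per-target certificates are therefore all the same set and share storage. A secondary point to state carefully is that the subgraph $G{\restr}(\res\setminus B)$ has size at most that of $G{\restr}\res$. Both passes and the verifier therefore stay within $O(|\V|+|E|+|E_{\mathcal D}|)$.
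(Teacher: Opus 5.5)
Your proposal is correct and follows essentially the same route as the paper. You run two traversals to get predecessor forests for $R_0$ and $R_1$, scan the resident transfer edges once to certify that no arc leaves $R_i$ for an omitted resident vertex (so each $R_i$ is exactly the reachable set), and scan $E_{\mathcal D}$ to certify $\Serve(\res)$ and $\Serve(\res\setminus B)$, using one shared closure certificate instead of per-endpoint separation sets; your explicit linear-time checks of forest acyclicity and edge membership fill in details the paper leaves implicit.
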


\begin{proof}
Traverse $G{\restr}\res$ and $G{\restr}(\res\setminus B)$ to obtain reachable sets $R_0,R_1$ and predecessor forests rooted at $\rho$.  Scan $E_{\mathcal D}$ to derive $D_i=\{d:\exists u\in R_i,(u,d)\in E_{\mathcal D}\}$, whence $\miss_{\res}(B)=D_0\setminus D_1$ and the weighted sum follows.  A verifier checks each predecessor forest and then scans every resident transfer edge to ensure that no edge leaves $R_i$ for an omitted resident vertex.  These two facts certify that $R_i$ is exactly the reachable set, rather than merely a subset.  A final scan of $E_{\mathcal D}$ certifies $D_0,D_1$.  The certificate is shared by all disconnected endpoints; emitting a separate separation set per endpoint could require quadratic output and is neither claimed nor needed.
\end{proof}

Lemma~\ref{lem:impact} turns eviction into a decision with a verifiable consequence. A block is \emph{admissible} iff $\miss_{\res}(B)\cap\mathcal D_{\!p}=\varnothing$, which the same certificate checks.

\subsection{Online block-reference specialization}

For comparison with classical paging, consider the specialization in which each request names one block, a hit means that block is resident, one block is admitted per fault, and at every eviction an admissible unmarked block exists.  The near tier then has $k=\lfloor\kappa/\beta\rfloor$ slots.  The policy maintains a mark bit per resident block, set on access and cleared at the end of a phase; on a fault it evicts an admissible unmarked block of least $I_w(B)$.

\begin{theorem}[Competitive guarantee]
\label{thm:competitive}
In the block-reference specialization, marking with impact tie-breaking is $k$-competitive in service faults: for every request sequence, $\ALG\le k\cdot\OPT+k$. No deterministic online policy achieves a ratio below $k$. If $p$ slots are occupied by permanently pinned blocks, the same statement holds with $k'=k-p$ for the remaining reference stream.
\end{theorem}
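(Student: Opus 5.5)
The plan is to recognise that, in the block-reference specialization, impact tie-breaking only chooses \emph{which} unmarked block to evict. The classical marking analysis of Sleator--Tarjan and Borodin--El-Yaniv~\cite{sleatortarjan,borodin} never depends on that choice. So the proof is the standard phase argument, restated under the stated admissibility assumption.

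\textbf{Step 1: phase partition.} Split the request sequence greedily into phases. Phase $i$ is a maximal segment containing at most $k$ distinct blocks, so every phase except possibly the last contains exactly $k$ distinct blocks, and the next request introduces a $(k+1)$-st.

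\textbf{Step 2: upper bound for $\ALG$.} Within a phase, every block requested becomes marked and stays resident until the phase ends. A block is evicted only if it is unmarked, and the hypothesis guarantees that an admissible unmarked block exists at every fault. Consequently each distinct block faults at most once per phase, so $\ALG$ incurs at most $k$ faults per phase. The hypothesis is used exactly here: admissibility never forces a marked block out, so impact tie-breaking is harmless.

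\textbf{Step 3: lower bound for $\OPT$.} Consider phase $i$ together with the first request of phase $i+1$. These requests involve $k+1$ distinct blocks, and all except the first request of phase $i$ must be served from a cache of $k$ slots. Hence any algorithm, $\OPT$ included, faults at least once in each such shifted window, and these windows are disjoint across consecutive phases. Writing $\Phi$ for the number of phases, this gives $\OPT\ge \Phi-1$. Combining with Step 2,
\[
\ALG\le k\Phi\le k\,\OPT+k,
\]
where the additive $k$ absorbs the initial-cache and last-phase boundary. Only the exact bookkeeping of which window starts where needs care; the usual approach charges each fault of $\OPT$ in the window from the second request of phase $i$ through the first request of phase $i+1$.

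\textbf{Step 4: deterministic lower bound.} Use the standard adversary on $k+1$ blocks that always requests the block the online policy does not hold. The online policy then faults on every request. Meanwhile Belady's furthest-in-future rule faults at most once per $k$ requests, which yields a ratio of at least $k$ asymptotically. Since the class of deterministic policies includes ours, nothing depends on impact weights. For this step I expect only to choose the blocks so that all are admissible, for instance with $\mathcal D_{\!p}=\varnothing$.

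\textbf{Step 5: pinned slots.} Permanently pinned blocks occupy $p$ slots and are never evicted, so they neither cause faults nor compete for space. The remaining stream runs on an effective cache of $k'=k-p$ slots, and Steps 1--4 apply verbatim with $k'$.

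The main obstacle is Step 2's reliance on admissibility. If the only unmarked blocks were inadmissible, the policy could be forced to evict a marked block, and the per-phase bound would break. The statement sidesteps this by hypothesis, so I would state explicitly where that hypothesis enters and otherwise defer to the textbook argument.
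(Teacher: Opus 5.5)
Your proposal is correct and follows the paper's proof step for step. It uses the same phase partition, the same bound of at most $k$ marking faults per phase regardless of the impact tie-break, the same bound $\OPT\ge q-1$ from the $k+1$ distinct blocks at each phase boundary, the cited classical deterministic lower bound, and the same reduction to $k'=k-p$ slots for pinned blocks. Your shifted windows, running from the second request of phase $i$ through the first request of phase $i+1$, are the precise way to make the paper's ``at least once across that boundary'' charges disjoint.
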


\begin{proof}
Partition the sequence into phases, each a maximal run touching at most $k$ distinct blocks. Within a phase marking faults at most once per distinct block, irrespective of which unmarked block the impact tie-breaker selects, hence at most $k$ times. The first request of phase $i+1$ and the $k$ distinct blocks of phase $i$ form $k+1$ distinct blocks, so any $k$-slot cache faults at least once across that boundary. With $q$ phases, $\OPT\ge q-1$ and $\ALG\le kq$, giving the stated inequality. The lower bound is the classical deterministic paging bound~\cite{sleatortarjan,borodin}. Permanently pinning $p$ blocks leaves an ordinary $k'$-slot instance on the remaining blocks.
\end{proof}

\begin{remark}[Non-uniform fetch cost]
\label{rem:weighted}
When a fault has a block-dependent retrieval cost, the specialization becomes weighted caching.  The impact certificate is unchanged, but the unweighted marking proof above cannot simply be reused; an appropriate weighted-caching algorithm and bound must be selected for that cost model.
\end{remark}

\subsection{Maintenance cost}

\begin{theorem}[Static and batched maintenance]
\label{thm:maint}
For one fixed residency, reachability and a certificate are computed in $O(|\V|+|E|)$ time.  Under straightforward re-traversal, $q$ candidate-impact queries cost $O(q(|\V|+|E|+|E_{\mathcal D}|))$.  Given one batch of deletions, exact post-batch reachability is recomputed in $O(|\V|+|E|)$ time; $q$ sequential states therefore cost $O(q(|\V|+|E|))$ by this method.  Because $\Pi$ is static, storing a block identifier at each location gives $B(v)$ in $O(1)$ time; an alleged eviction set is checked for alignment in time linear in its representation.
\end{theorem}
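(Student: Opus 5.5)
The statement has four parts: a single traversal, per-candidate re-traversal, batched recomputation after deletions, and the $O(1)$ block lookup with an alignment check. I will prove each by a direct cost accounting, reusing the traversal and verifier already analysed in Theorem~\ref{thm:dichotomy} and Lemma~\ref{lem:impact}.

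\textbf{Static part.} For a fixed residency $\res$, I run BFS or DFS from $\rho$ in $G{\restr}\res$. This touches each vertex and each adjacency-list edge at most once, and I discard any edge whose head is not in $\res$ with an $O(1)$ membership test against a Boolean array. The traversal records a predecessor forest, which is exactly the certificate of Lemma~\ref{lem:impact}: the forest witnesses every reached vertex, and a closing edge scan witnesses that no resident vertex outside the reached set is reachable. Both steps cost $O(|\V|+|E|)$.

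\textbf{Candidate queries.} A single candidate $B$ requires the traversal of $G{\restr}(\res\setminus B)$ plus one scan of $E_{\mathcal D}$. By Lemma~\ref{lem:impact} this costs $O(|\V|+|E|+|E_{\mathcal D}|)$. The pre-eviction sets $R_0$ and $D_0$ can be computed once and shared. Summing over $q$ queries gives $O(q(|\V|+|E|+|E_{\mathcal D}|))$. Here \emph{straightforward} means exactly this: no incremental structure is assumed, and each query is charged one full pass.

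\textbf{Batched deletions.} A batch $S$ of deletions is applied by clearing the membership bits of $S$, in time $O(|S|)\subseteq O(|\V|)$, followed by one fresh traversal. Correctness holds because reachability in $G{\restr}(\res\setminus S)$ is fully determined by the new membership array, so recomputing from scratch is exact. For $q$ sequential states, I charge one traversal per state and obtain $O(q(|\V|+|E|))$ by summation.

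\textbf{Alignment.} Since $\Pi$ is static, I precompute an array $\mathrm{blk}[v]$ once, so that $B(v)$ is a single array lookup. To check an alleged eviction set $S$, given as a list of locations, I carry out three steps:
\begin{enumerate}
\item Mark the members of $S$.
\item For each $v\in S$, increment a counter for $\mathrm{blk}[v]$.
\item Accept iff each touched block's counter equals its size, which is $\beta$ by Definition~\ref{def:net}.
\end{enumerate}
This test is equivalent to $S=\cl{S}$ by Definition~\ref{def:closure}. Each step is linear in $|S|$, provided the counters are reset only for the touched blocks. If $S$ is instead given as a list of block identifiers, alignment holds by construction and only membership in $\Pi$ needs checking.

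The main obstacle is bookkeeping rather than mathematics. The alignment check must avoid an $O(N)$ initialisation of the counter array. I will handle this either with a timestamped array or by resetting only the touched entries afterwards. Alternatively, I can note that the counter array is allocated once as part of the static precomputation.
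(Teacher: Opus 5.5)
Your proposal is correct and takes the same route as the paper, whose proof simply invokes breadth- or depth-first traversal together with Lemma~\ref{lem:impact}, multiplies by $q$ for independent re-traversals, and reads alignment off the static partition map; your version makes that cost accounting explicit. One bookkeeping detail is worth stating: increment a block counter only when a location is newly marked, and check that each $v\in S$ lies in $\mathcal C$, so that duplicate list entries cannot make a partial block look full.
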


\begin{proof}
The bounds follow from breadth-first or depth-first traversal and Lemma~\ref{lem:impact}; applying the same traversal independently to $q$ states multiplies the cost by $q$.  The final claims follow from the explicit static partition map.  No stronger decremental bound is inferred merely from the fact that each block is deleted once; cycles make such a claim nontrivial.
\end{proof}

\begin{remark}[General regime]
Under interleaved eviction and reload the served set is fully dynamic directed reachability.  The stated bounds cover re-traversal in a general graph and the one-pass specialization evaluated in Section~\ref{sec:gpu}; a stronger dynamic bound would require a separate data structure.
\end{remark}

\subsection{Why reachability-awareness is necessary}
\label{ssec:necessity}

Topology information is necessary for bounded one-step damage.  A history-only victim order cannot distinguish two states with the same request metadata but different downstream service relations.

\begin{definition}[Topology-oblivious victim selector]
\label{def:oblivious}
A deterministic victim selector is \emph{topology-oblivious} if its choice among resident candidate blocks depends on request-history metadata and the block partition but is invariant to $E$, $E_{\mathcal D}$, and the demand weights.  LRU, FIFO, LFU, CLOCK, random-with-a-fixed-seed, and an unweighted marking tie-breaker have this property.
\end{definition}

\begin{theorem}[Per-step optimality of impact pricing]
\label{thm:perstep}
At any state, evicting an admissible block of minimum $I_w(\cdot)$ leaves the maximum total served demand weight attainable by evicting one admissible block.  A selected block has a linear-size certificate; selecting and certifying the minimum among $q$ candidates by general-graph traversal costs $O(q(|\V|+|E|+|E_{\mathcal D}|))$.
\end{theorem}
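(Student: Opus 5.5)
The plan is to reduce the optimality claim to an identity that expresses post-eviction served weight through the impact $I_w$, and then to read off the cost bounds from Lemma~\ref{lem:impact}. Write $W(X)=\sum_{d\in X}w(d)$ for $X\subseteq\mathcal D$. First I will check monotonicity. If $\res'\subseteq\res$, then $\Reach(\rho,\res')\subseteq\Reach(\rho,\res)$, because every $E$-path inside $\res'$ is also a path inside $\res$; when $\rho\notin\res'$ the left side is empty. Since $\Serve$ is the image of the reachable set under $E_{\mathcal D}$, it follows that $\Serve(\res')\subseteq\Serve(\res)$. In particular, for every block $B$ we have $\Serve(\res\setminus B)\subseteq\Serve(\res)$.

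Second, this inclusion gives an exact decomposition. $\Serve(\res)$ is the disjoint union of $\Serve(\res\setminus B)$ and $\miss_{\res}(B)$, so
\[
W(\Serve(\res\setminus B)) = W(\Serve(\res)) - I_w(B).
\]
The first term on the right does not depend on $B$. Over the admissible blocks, maximizing the served weight after eviction is therefore the same as minimizing $I_w(B)$. Because the weights are nonnegative, all quantities are finite and well defined. The candidate set is finite and, by hypothesis, nonempty, so a minimizer exists, and any minimizer attains the maximum. Eviction of a block $B$ that is not resident, or that meets only the origin-free cacheable locations, is covered by the same formula, since in that case $\miss_{\res}(B)$ may simply be empty.

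Third, the certificate claims. For the selected block, Lemma~\ref{lem:impact} provides a shared certificate of $R_0$, $R_1$, $D_0$ and $D_1$ of linear size. The same certificate checks admissibility, via $\miss_{\res}(B)\cap\mathcal D_{\!p}=\varnothing$. To certify minimality among $q$ candidates, I will run the Lemma~\ref{lem:impact} computation for each candidate. Each run costs $O(|\V|+|E|+|E_{\mathcal D}|)$, and the pre-eviction traversal is common to all runs, so it can be reused. After that, a linear scan picks the admissible candidate with the smallest weight, for a total of $O(q(|\V|+|E|+|E_{\mathcal D}|))$, which is consistent with Theorem~\ref{thm:maint}. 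Note that the linear-size certificate certifies the selected block's impact and admissibility. A minimality certificate would concatenate the $q$ per-candidate certificates, and its size is then within the stated time bound.

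The main obstacle is the monotonicity step, specifically the edge case in which $B$ contains vertices that lie on every path. That case is handled because $\Reach$ is defined by induced paths and the origin $\rho\notin\mathcal C$ is never evicted. Once monotonicity holds, the rest is bookkeeping.
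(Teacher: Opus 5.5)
Your proposal is correct and follows the paper's argument. The served weight after evicting $B$ equals the pre-eviction served weight minus $I_w(B)$, so minimizing $I_w$ over admissible blocks maximizes residual service; the certificates come from Lemma~\ref{lem:impact} and the $q$-fold re-traversal cost from Theorem~\ref{thm:maint}. Your explicit inclusion $\Serve(\res\setminus B)\subseteq\Serve(\res)$ is exactly what the paper's phrase ``removes demand of total weight exactly $I_w(B)$'' tacitly relies on, so it is a welcome addition, although it needs no edge-case analysis: containment of induced subgraphs gives it for every $B$.
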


\begin{proof}
By Lemma~\ref{lem:impact}, evicting $B$ removes demand of total weight exactly $I_w(B)$.  Subtracting this value from the pre-eviction served weight proves the first claim.  Lemma~\ref{lem:impact} supplies a certificate for each candidate evaluation, and Theorem~\ref{thm:maint} gives the stated total cost.
\end{proof}

\begin{theorem}[Unbounded one-step damage of topology-obliviousness]
\label{thm:separation}
For every deterministic topology-oblivious selector, every $m\ge1$, and every state with at least two resident candidate blocks, there is a demand topology consistent with the same history metadata on which its chosen eviction has impact at least $m$ while another candidate has impact $1$.  Its one-step damage ratio is therefore unbounded.
\end{theorem}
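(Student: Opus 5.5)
The proof is an adversary argument that exploits the invariance in Definition~\ref{def:oblivious}. Fix a deterministic topology-oblivious selector, an integer $m\ge1$, and a state whose history metadata and block partition $\Pi$ are fixed, with at least two resident candidate blocks. I first run the selector on this metadata together with any auxiliary topology, for instance one with no demand edges. Call its choice $B^{\ast}$. By invariance the selector returns the same $B^{\ast}$ under every choice of $E$, $E_{\mathcal D}$, and $w$ consistent with the metadata. I then pick a second resident candidate $B'\neq B^{\ast}$, which exists by hypothesis.

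Next I construct the demand topology on the existing locations, so that only $E$, $E_{\mathcal D}$, and $w$ change. Choose one representative location $u\in B^{\ast}$ and one representative $u'\in B'$. Set $E=\{(\rho,u),(\rho,u')\}$, so that each is reached from the origin through a one-hop resident path. Add fresh demand endpoints $d_1,\dots,d_m$ attached only to $u$, and one endpoint $d'$ attached only to $u'$. Give all of them unit weight. Alternatively, attach a single endpoint of weight $m$ to $u$; I would note that either variant works.

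I then compute the two impacts directly from Lemma~\ref{lem:impact}. Evicting $B^{\ast}$ removes $u$ from the residency. Since $u$ is the only parent of $d_1,\dots,d_m$, this disconnects exactly those endpoints. The endpoint $d'$ stays served, because $u'\in B'$ is a different block and is still reached from $\rho$. Hence $I_w(B^{\ast})=m$. Symmetrically, evicting $B'$ disconnects only $d'$, so $I_w(B')=1$. The damage ratio is therefore at least $m$, and since $m$ is arbitrary it is unbounded. By Theorem~\ref{thm:perstep}, minimum-impact selection would have chosen a block of impact at most $1$.

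The main obstacle is the consistency requirement: the new topology must keep the same history metadata. I would handle it by treating metadata as exogenous. Definition~\ref{def:oblivious} lets it depend on request history and $\Pi$ but not on $E$, $E_{\mathcal D}$, or $w$. The new demand endpoints need no past requests, or they inherit whatever labels the metadata fixes. If the metadata records request counts per endpoint, the adversary instead attaches the existing requested endpoints appropriately. Concretely, all endpoints other than the $d_i$ and $d'$ are routed only through $\rho$ directly, so they never appear in any $\miss$ set.

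A secondary point is admissibility. I would take the protected set $\mathcal D_{\!p}=\varnothing$, or attach protected endpoints directly to $\rho$, so that both candidates are admissible.
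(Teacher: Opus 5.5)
Your proof is correct and uses the paper's adversary argument: invariance to $E$, $E_{\mathcal D}$ and $w$ fixes the selector's choice $B^{\ast}$ before the topology is built, and then $m$ units of demand are attached only to $B^{\ast}$ and one private unit to another candidate. The differences are minor. You fix $E$ explicitly as a star from $\rho$, whereas the paper leaves $E$ implicit and tacitly assumes, as in the two-hop model, that every candidate is served and none lies on another's only path. You also give a private endpoint only to $B'$, so any further candidates have impact $0$ and the ratio against the minimum is degenerate rather than exactly $m$ as in the paper's construction, which gives every other candidate one unit; the ratio is unbounded either way.
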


\begin{proof}
Fix the residency, partition, and all history metadata.  Since the selector ignores topology, it chooses some block $B^\ast$ before $E_{\mathcal D}$ is specified.  Add $m$ unit-weight demand endpoints whose only service links originate in $B^\ast$.  For every other candidate block, add one private unit-weight endpoint and no further demand.  All candidates remain resident and all pre-eviction demand is served.  Evicting $B^\ast$ disconnects $m$ endpoints, whereas evicting any other candidate disconnects exactly one.  The metadata observed by the topology-oblivious selector did not change, so it still chooses $B^\ast$; impact pricing chooses a unit-impact block.  Letting $m$ grow proves the claim.
\end{proof}

\begin{remark}
Theorem~\ref{thm:separation} isolates the information gap behind request-history eviction under path replication~\cite{ioannidis}: a block may carry substantial downstream demand without being recently requested itself.  The experiment in Section~\ref{sec:gpu} measures this one-step damage directly.
\end{remark}

\section{Replay-compressed packet-scale evaluation}
\label{sec:gpu}

The general impact oracle performs one reachability computation per candidate.  The implementation evaluates a two-hop specialization $\rho\to\Pi\to\mathcal D$ in one streaming pass, with at most $r$ candidate cache blocks per demand endpoint.  A packet contributes one unit of packet weight and its encoded size in byte weight; one packet-impact and one byte-impact counter are kept per block.

\begin{theorem}[One-pass impact accumulation]
\label{thm:onepass}
For a two-hop network, let $P(d)\subseteq\Pi$ be the distinct parent blocks of demand $d$, and let $z_b\in\{0,1\}$ indicate whether block $b$ is resident.  For any packet stream $d_1,\ldots,d_N$ with nonnegative integer packet weights $a_i$,
\[
 I(b)=\sum_{i=1}^{N} a_i\,
 \mathbf 1\!\left[
 b\in P(d_i),\ z_b=1,\ 
 \sum_{c\in P(d_i)}z_c=1
 \right].
\]
All values $I(b)$ are computed exactly in one pass using $O(|\Pi|)$ counters and $O(Nr)$ work.  Parallel atomic addition preserves the result under any execution order.
\end{theorem}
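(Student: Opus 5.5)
The proof has two parts. First I show that the displayed sum equals the impact $I_w(b)$ of Lemma~\ref{lem:impact} in the two-hop specialization. Then I check that a one-pass accumulator computes that sum within the stated resources.

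\textbf{Identification.} In a two-hop network $\rho\to\Pi\to\mathcal D$, every block is a direct child of $\rho$. Hence $\Reach(\rho,\res)$ consists of $\rho$ and the resident blocks, so $d\in\Serve(\res)$ iff $\sum_{c\in P(d)}z_c\ge1$. After evicting $b$, the demand $d$ is served iff some $c\in P(d)\setminus\{b\}$ is resident. Therefore $d\in\miss_{\res}(b)$ iff
\begin{itemize}
\item $b\in P(d)$ and $z_b=1$, and
\item no other parent of $d$ is resident.
\end{itemize}
Given $z_b=1$ and $b\in P(d)$, the second condition says exactly $\sum_{c\in P(d)}z_c=1$. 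The weight $w(d)$ is the total packet weight of the stream occurrences of $d$, so $w(d)=\sum_{i:d_i=d}a_i$. Regrouping $I_w(b)=\sum_{d\in\miss}w(d)$ by stream position gives exactly the displayed indicator sum. Taking $P(d)$ to be the set of \emph{distinct} parents matters here: with duplicates, the count $\sum z_c$ would be inflated.

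\textbf{Algorithm.} Keep one counter per block, initialised to zero, so $O(|\Pi|)$ counters. For each packet $i$, scan $P(d_i)$, which has $|P(d_i)|\le r$ entries. While scanning, count the resident parents and remember the last resident one. If the count equals $1$, add $a_i$ to that parent's counter. This costs $O(r)$ per packet and $O(Nr)$ in total, and it touches each packet once. After the pass, the counter for $b$ equals the displayed sum, because packet $i$ contributes $a_i$ to $b$ exactly when its indicator for $b$ is $1$. The indicator is $1$ for at most one block per packet, namely the unique resident parent. The residency vector $z$ is fixed during the pass, so every indicator is determined by $d_i$ and $z$ alone.

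\textbf{Parallel order independence.} This is the only subtle point. Each packet's contribution is a single increment by a nonnegative integer, and each increment is a pure function of $(d_i,a_i,z)$. The final counter is then a sum over a multiset of integer increments. Integer addition is associative and commutative. Atomic fetch-add makes each increment indivisible, so no update is lost under any interleaving. Every linearization therefore yields the same sum. I would also assume the counter width is large enough that the total weight does not overflow, or else state the result modulo the word size, where it remains exact as a residue. The remaining accounting is routine.
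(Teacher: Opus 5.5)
Your proposal is correct and takes the same approach as the paper. The paper likewise identifies the indicator with ``$b$ is the unique resident parent,'' scans the at most $r$ distinct parents of each packet and adds $a_i$ to that parent's counter, and gets order independence under atomics from associativity and commutativity of integer addition, provided the counters do not overflow. Your explicit reduction of the display to $I_w(b)$ of Lemma~\ref{lem:impact}, using $w(d)=\sum_{i:d_i=d}a_i$, spells out what the paper states in a single sentence.
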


\begin{proof}
A served packet is lost by evicting $b$ exactly when $b$ is its unique resident parent, which is the indicator in the display.  During one pass, inspect the at most $r$ distinct parents, identify whether the resident-parent count is one, and if so add $a_i$ to that parent's counter.  Every packet contributes to exactly the blocks prescribed by the formula, proving exactness.  Integer addition is associative and commutative; atomic updates therefore produce the same counter vector under every interleaving, provided the counters do not overflow.
\end{proof}

\begin{proposition}[Warp-aggregated exact updates]
\label{prop:warp}
In one warp iteration, partition the lanes whose packet has exactly one resident parent by that parent identifier.  Replacing every lane's two global atomic additions by one packet-count addition and one byte-sum addition per nonempty class leaves both impact vectors unchanged.  If the class sizes are $g_1,\ldots,g_h$, the number of global impact atomics falls from $2\sum_j g_j$ to $2h$.
\end{proposition}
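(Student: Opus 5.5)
The plan is to reduce the claim to the exactness argument of Theorem~\ref{thm:onepass} together with associativity and commutativity of integer addition. Fix one warp iteration and let $L$ be the set of lanes whose packet $d_\ell$ satisfies $\sum_{c\in P(d_\ell)}z_c=1$. For $\ell\in L$ write $\pi(\ell)$ for the unique resident parent, $a_\ell$ for its packet weight and $s_\ell$ for its byte size. Lanes outside $L$ contribute nothing to either vector in the unaggregated scheme, by the indicator in Theorem~\ref{thm:onepass}. They are therefore ignored by both schemes.

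First I write the contribution of this iteration to each counter under the original scheme. Block $b$ receives $\sum_{\ell\in L,\,\pi(\ell)=b}a_\ell$ in the packet vector and $\sum_{\ell\in L,\,\pi(\ell)=b}s_\ell$ in the byte vector. These are delivered as $2|L|$ separate atomic additions.

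Next I partition $L$ into classes $L_1,\dots,L_h$ by the value of $\pi$, with parent identifiers $b_1,\dots,b_h$ and sizes $g_j=|L_j|$. The aggregated scheme issues, for each class, one addition of $A_j=\sum_{\ell\in L_j}a_\ell$ to the packet counter of $b_j$ and one addition of $S_j=\sum_{\ell\in L_j}s_\ell$ to the byte counter of $b_j$. Since the classes partition $L$ and each class has a single target, the per-block increments coincide with the sums above, so the per-iteration increments agree counter by counter. Summing over all iterations and invoking the interleaving-invariance of Theorem~\ref{thm:onepass} shows that both final impact vectors are unchanged.

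Counting operations gives the atomic bound: before, each lane in $L$ issues two atomics, for a total of $2\sum_j g_j=2|L|$; after, there are two per nonempty class, for a total of $2h$.

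The one delicate point is the overflow proviso of Theorem~\ref{thm:onepass}. The intra-warp partial sums $A_j,S_j$ are subsums of the quantities eventually added to a counter. So if every final counter fits its integer width, then by nonnegativity of the weights every partial sum does as well, and no new overflow is introduced. I would state this explicitly, along with the observation that the in-warp reduction is exact integer addition, so no rounding issue arises.
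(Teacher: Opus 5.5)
Your proposal is correct and follows essentially the same argument as the paper: every lane in a class targets the same block, so by associativity and commutativity of integer addition, one addition of the class packet total and one of the class byte sum equal the lane-wise additions, and the count $2\sum_j g_j \to 2h$ is immediate. Your general weights $a_\ell$ (the paper uses unit packet weight, so $A_j=g_j$) and your explicit overflow remark are harmless refinements of what the paper leaves implicit through the proviso in Theorem~\ref{thm:onepass}.
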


\begin{proof}
All lanes in class $j$ target the same block.  Their packet contribution is $g_j$ and their byte contribution is the class byte sum.  Associativity and commutativity of integer addition make one update by each class total identical to the lane-wise updates.  The operation count follows directly.
\end{proof}

\begin{corollary}[Collision-sparse warp regime]
\label{cor:warp-sparse}
For a warp of $W$ independently generated packets, let $\alpha_b$ be the probability that one packet has $b$ as its unique resident parent and put $A=\sum_b\alpha_b$.  If $L$ is the number of lane-wise impact updates and $G$ the number after grouping by target, then
\[
\mathbb E[L-G]
\le {W\choose2}\sum_b\alpha_b^2,
\qquad
\frac{\mathbb E[L-G]}{\mathbb E[L]}
\le\frac{W-1}{2}\,
\frac{\sum_b\alpha_b^2}{A}.
\]
If $\max_b\alpha_b=O(1/|\Pi|)$, the fractional opportunity for warp aggregation is $O(W/|\Pi|)$.
\end{corollary}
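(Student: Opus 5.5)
Set up the random variables first. For lane $\ell\in\{1,\dots,W\}$ let $X_\ell\in\Pi\cup\{\bot\}$ be the unique resident parent of its packet, with $X_\ell=\bot$ when the packet has zero or at least two resident parents. Independence of the packets makes the $X_\ell$ independent with $\Pr[X_\ell=b]=\alpha_b$. The lane-wise update count is $L=\sum_\ell\mathbf 1[X_\ell\ne\bot]$, so linearity gives $\mathbb E[L]=WA$. By Proposition~\ref{prop:warp}, $G$ is the number of nonempty classes $h=|\{b: \exists\ell,\ X_\ell=b\}|$. Counting each of the two per-packet atomics separately only multiplies $L$ and $G$ by $2$, so both ratios are unaffected and we may count one update per lane.

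The key step is a deterministic inequality. For each block $b$ let $g_b$ be the number of lanes with $X_\ell=b$. Then
\[
L-G=\sum_b (g_b-\mathbf 1[g_b\ge1]).
\]
For every nonnegative integer $g$ we have $g-\mathbf 1[g\ge1]\le\binom g2$: both sides are $0$ at $g=0,1$, and for $g\ge2$ the inequality is $g-1\le g(g-1)/2$. Therefore
\[
L-G\le\sum_b\binom{g_b}2=\#\{\{\ell,\ell'\}:\ \ell\neq\ell',\ X_\ell=X_{\ell'}\ne\bot\}.
\]
Take expectations and use independence of distinct lanes. Each unordered pair collides on $b$ with probability $\alpha_b^2$, so
\[
\mathbb E[L-G]\le\binom W2\sum_b\alpha_b^2 .
\]
Dividing by $\mathbb E[L]=WA$ yields the ratio bound $\tfrac{W-1}{2}\sum_b\alpha_b^2/A$. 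The case $A=0$ must be excluded, or handled by noting that then $L=0$ almost surely.

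For the final sentence, suppose $\max_b\alpha_b\le c/|\Pi|$. Then $\sum_b\alpha_b^2\le(\max_b\alpha_b)\sum_b\alpha_b=cA/|\Pi|$. Hence the ratio is at most $c(W-1)/(2|\Pi|)=O(W/|\Pi|)$.

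I expect no real obstacle. The only delicate points are stating the pointwise inequality $g-\mathbf 1[g\ge1]\le\binom g2$, rather than trying to compute $\mathbb E[G]$ exactly, and making explicit that "independently generated" means the lane outcomes are independent with the stated marginals. That independence is exactly what turns the pair count into $\binom W2\sum_b\alpha_b^2$.
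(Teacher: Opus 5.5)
Your proof is correct and is essentially the paper's own argument. It uses the pointwise bound $(g_b-1)_+\le\binom{g_b}{2}$, the pair-collision expectation $\binom W2\sum_b\alpha_b^2$ under lane independence, division by $\mathbb E[L]=WA$, and $\sum_b\alpha_b^2\le A\max_b\alpha_b$ for the asymptotic claim.

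One small precision concerns your factor-of-two remark. Doubling $L$ and $G$ leaves only the ratio statements invariant, while the absolute bound on $\mathbb E[L-G]$ would double. The first display therefore rests on reading $L$ and $G$ as one update per lane or per class, which is also the convention in the paper's proof.
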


\begin{proof}
If $n_b$ lanes target $b$, grouping saves $(n_b-1)_+$ updates, and
$(n_b-1)_+\le {n_b\choose2}$.  Summing over targets and taking expectations gives
$\mathbb E[L-G]\le {W\choose2}\sum_b\alpha_b^2$.  Since
$\mathbb E[L]=WA$, division proves the second inequality.  Finally,
$\sum_b\alpha_b^2\le A\max_b\alpha_b$ gives the asymptotic claim.
\end{proof}

\paragraph{Measured aggregation opportunity.}
Corollary~\ref{cor:warp-sparse} is active in the production regime.  The measured lane-update/group-update factor averaged only $1.00284$ at $|\Pi|=4096$, $1.00071$ at $|\Pi|=16384$, and $1.00018$ at $|\Pi|=65536$; over the full matrix it was $1.00124$ (maximum $1.00622$).  Thus warp grouping is an exactness-preserving implementation option but provides no material reduction for this wide, collision-sparse namespace.  It must not be credited with the throughput gain of the counter-bank pilot below.

\begin{proposition}[Sharded exact counter banks]
\label{prop:banks}
Let $s$ be a power of two and assign each warp $w$ to a bank $h(w)\in\{0,\ldots,s-1\}$.  Replace the packet and byte counter for block $b$ by banked counters $C^{\rm pkt}_{b,j}$ and $C^{\rm byte}_{b,j}$.  A warp class for $b$ updates only bank $h(w)$, and the scenario result is
\[
I^{\rm pkt}(b)=\sum_{j=0}^{s-1}C^{\rm pkt}_{b,j},
\qquad
I^{\rm byte}(b)=\sum_{j=0}^{s-1}C^{\rm byte}_{b,j}.
\]
The banked accumulator is exact under every interleaving, uses $O(s|\Pi|)$ counters, and adds $O(s|\Pi|)$ deterministic reduction work per scenario.  For a fixed block, updates issued by warps in different banks cannot serialize on the same address.
\end{proposition}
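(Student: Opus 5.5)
The plan is to reduce Proposition~\ref{prop:banks} to the exactness argument already used for Theorem~\ref{thm:onepass} and Proposition~\ref{prop:warp}. The key observation is that banking only reroutes where each increment is stored; it never changes the multiset of increments. The proof proceeds in four steps.

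\textbf{Exactness.} Fix a block $b$. By Proposition~\ref{prop:warp}, the warp-grouped updates targeting $b$ form a multiset of pairs (packet count $g_j$, class byte sum). Their totals, taken over all warps, equal $I^{\rm pkt}(b)$ and $I^{\rm byte}(b)$ as characterized in Theorem~\ref{thm:onepass}. Under banking, each such pair is added to exactly one bank, namely $h(w)$ for the issuing warp $w$. Every bank counter starts at zero. Therefore the banks $C^{\rm pkt}_{b,0},\dots,C^{\rm pkt}_{b,s-1}$ partition this multiset of increments, and likewise for the byte counters.

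By associativity and commutativity of integer addition, $\sum_j C^{\rm pkt}_{b,j}$ equals the sum of all increments for $b$, which is $I^{\rm pkt}(b)$. The same holds for the byte vector. Each bank counter is itself updated by atomic additions, so its final value does not depend on the interleaving. This argument carries the same no-overflow proviso as Theorem~\ref{thm:onepass}; I will state that explicitly, noting that each bank value is bounded by the unbanked total.

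\textbf{Resource bounds.} These are counting arguments. There are two counters for each of the $|\Pi|$ blocks in each of the $s$ banks, giving $2s|\Pi|=O(s|\Pi|)$ counters. The final reduction performs $s-1$ additions per block per vector, in a fixed bank order, so it takes $O(s|\Pi|)$ work and is deterministic. I will remark that even a nondeterministic reduction order would give the same result for integers. The assumption that $s$ is a power of two is used only to make $h$ cheap to implement, for example by masking bits of the warp index. It plays no role in correctness.

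\textbf{Conflict-freedom.} This is the step needing the most care, because it is a statement about memory addresses rather than arithmetic. I will fix the layout: the counter $C_{b,j}$ occupies its own address, distinct for distinct $(b,j)$. Suppose two warps $w\neq w'$ have $h(w)\neq h(w')$. Then their updates for block $b$ target $C_{b,h(w)}$ and $C_{b,h(w')}$, which are different addresses, so they cannot serialize on a common address.

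The claim should not be overstated. It does not rule out contention at cache-line or memory-channel granularity, and it says nothing about warps that share a bank. I would phrase the proof to cover exactly the address-level statement in Proposition~\ref{prop:banks} and nothing more.
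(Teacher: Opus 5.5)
Your proposal is correct and follows essentially the same route as the paper: each warp-class increment lands in exactly one bank, so the banks partition the multiset of contributions to $b$; associativity and commutativity then recover the unbanked totals; the counter and reduction bounds follow by direct counting; and conflict-freedom follows because distinct banks occupy distinct addresses. Your additions, namely the explicit no-overflow proviso and restricting the last claim to the address level, are sensible refinements that the paper leaves implicit.
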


\begin{proof}
Each warp class contributes to exactly one bank of its target block.  The banks therefore partition the multiset of contributions to $b$.  Summing them recovers the unbanked packet count and byte sum by associativity and commutativity.  There are two arrays of $s|\Pi|$ counters and two length-$s$ reductions per block.  Distinct banks occupy distinct addresses, proving the final claim.
\end{proof}

\paragraph{Counter-bank crossover pilot.}
The bank count was fixed before the production matrix by a controlled two-device crossover at $|\Pi|=4096$, $r=1$, and $q=0.5$.  Each 30-second run used the same seed and 41~GiB replay buffer; the second round exchanged the one-bank and 32-bank assignments between the two devices.  Mean throughput was 20.13~Gpacket/s with one bank and 21.25~Gpacket/s with 32 banks, a factor of 1.055 (5.54\%).  Complete JSON records and progress logs for all four runs are retained in the optimization checkpoint.  This pilot selects the implementation parameter; it is not included in the production matrix or Table~\ref{tab:gpu}.

\begin{theorem}[Seeded exact-stratified residency index]
\label{thm:exact-residency}
Let the hot and cold block classes each have size $H=2^h$.  For class $j$, derive an odd $a_j$ and offset $c_j$ from the scenario seed and define
\[
\varphi_j(x)=(a_jx+c_j)\bmod H.
\]
For target fraction $q$, mark local block $x$ resident iff $\varphi_j(x)<k$, where $k=\operatorname{round}(qH)$.  The index uses $O(1)$ words, answers membership in $O(1)$ time, and places exactly $k$ resident blocks in each class.  Its realized resident fraction is $k/H$, at distance at most $1/(2H)$ from $q$.  If the route-draw masses of the classes sum to one, their total resident probability mass is exactly $R=k/H$.
\end{theorem}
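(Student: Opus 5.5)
The plan is to show that each $\varphi_j$ is a bijection of $\mathbb Z/H\mathbb Z$; every remaining claim then follows by counting. Since $H=2^h$ and $a_j$ is odd, $\gcd(a_j,H)=1$, so $a_j$ has a multiplicative inverse modulo $H$. Hence $x\mapsto a_jx$ is a permutation of $\{0,\dots,H-1\}$, and composing with the translation by $c_j$ keeps it a permutation. Under a bijection the preimage of $\{0,\dots,k-1\}$ has exactly $k$ elements, provided $0\le k\le H$. This holds because $q\in[0,1]$ gives $\operatorname{round}(qH)\in\{0,\dots,H\}$, so the resident set of each class has cardinality exactly $k$, independent of the seed.

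The resource bounds come from what the index stores and how it answers a query. It stores only $a_j,c_j$ for the two classes together with $k$ and $h$, which is $O(1)$ words. A membership query performs one multiply, one add, one reduction modulo $2^h$ (a bit mask), and one comparison, so it takes $O(1)$ time. With exactly $k$ resident blocks out of $H$, the realized fraction is $k/H$. By the definition of rounding, $|k-qH|\le 1/2$, and dividing by $H$ gives $|k/H-q|\le 1/(2H)$.

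For the probability-mass claim, the step needing care is stating the route-draw model precisely. I will read it as follows: a route first selects class $j$ with mass $\pi_j$, where $\pi_{\mathrm{hot}}+\pi_{\mathrm{cold}}=1$, and then selects a local block uniformly within that class. Under this reading, the conditional resident probability in each class is $k/H$, and the law of total probability gives $R=\sum_j\pi_j\cdot k/H=k/H$. If a non-uniform within-class law were intended, the claim would fail, so the proof will state the uniform-within-class assumption explicitly.

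The main obstacle is therefore not the arithmetic but pinning down this interpretation of "route-draw masses," together with the edge cases $k=0$ and $k=H$. Both edge cases are handled by the same bijection argument.
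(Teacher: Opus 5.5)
Your proposal is correct and follows essentially the same route as the paper's proof. Both arguments use invertibility of the odd multiplier $a_j$ modulo $2^h$ to make $\varphi_j$ a bijection, so each class has exactly $k$ resident blocks, and then obtain the remaining claims by constant storage, rounding, and weighting by class masses summing to one. Your explicit uniform-within-class assumption spells out what the paper's one-line weighting step uses implicitly; it matches the paper's two-temperature setup, where $\pi_b$ is constant within each class.
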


\begin{proof}
Multiplication by odd $a_j$ is invertible modulo $2^h$; adding $c_j$ preserves bijectivity.  Hence precisely the $k$ preimages of $\{0,\ldots,k-1\}$ are resident.  The arithmetic stores only $(a_j,c_j,k)$ and takes constant time.  Rounding gives $|k/H-q|\le1/(2H)$.  Both classes have resident fraction $k/H$, so weighting them by class masses that sum to one gives $R=k/H$.
\end{proof}

\begin{proposition}[Expected impact under route draws]
\label{prop:expected-impact}
Suppose each of the $r$ route draws independently chooses block $b$ with probability $\pi_b$, repeated parents are deduplicated, and the resident set is fixed.  Put
\[
R=\sum_{c:z_c=1}\pi_c.
\]
For a resident block $b$, the expected packet impact over $N$ unit-weight packets is
\[
\mathbb E[I(b)]
=N\left[(1-R+\pi_b)^r-(1-R)^r\right].
\]
In the two-temperature experiment, $\pi_b=1.8/|\Pi|$ for a hot block and $\pi_b=0.2/|\Pi|$ for a cold block.  Hence, for $r=1$, every resident cold block has expected impact $0.2N/|\Pi|$, independently of the realized resident set.
\end{proposition}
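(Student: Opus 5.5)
By linearity, Theorem~\ref{thm:onepass} with unit weights gives $\mathbb E[I(b)]=N\cdot\Pr[\text{$b$ is the unique resident parent of one packet}]$. The task is therefore to compute this single-packet probability for a fixed resident block $b$. Packets are i.i.d., each with $r$ independent route draws. The resident set, and hence $R$, is fixed.

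The first step is to turn the event about the deduplicated parent set $P(d)$ into an event about the raw draws. Deduplication does not change which distinct blocks appear. So ``$b\in P(d)$ and $\sum_{c\in P(d)}z_c=1$'' holds exactly when every draw lands in $\{b\}\cup(\text{nonresident blocks})$ and at least one draw equals $b$.

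The second step is the count itself. One draw lands in $\{b\}\cup\{c:z_c=0\}$ with probability $\pi_b+(1-R)$; this uses $\sum_c\pi_c=1$, which I will state as the standing normalization of the route-draw distribution. One draw lands in the nonresident blocks alone with probability $1-R$. The target event is ``all draws in the first set'' minus ``all draws in the second set''. By independence its probability is $(1-R+\pi_b)^r-(1-R)^r$, and multiplying by $N$ gives the displayed formula.

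The specialization is then immediate. For $r=1$ the difference is $(1-R+\pi_b)-(1-R)=\pi_b$, so $R$ cancels. With $\pi_b=0.2/|\Pi|$ for a cold block, every resident cold block has expected impact $0.2N/|\Pi|$, whatever the resident set is. The hot and cold values of $\pi_b$ are the experiment's design parameters. They sum to one because half the blocks carry mass $1.8$ and half carry $0.2$, averaging $1/|\Pi|$ per block. I will note this so the normalization used above is visibly satisfied.

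The only delicate point is the deduplication step. I must check that repeated draws of $b$, or of a nonresident block, do not break the ``unique resident parent'' characterization. This holds because the indicator in Theorem~\ref{thm:onepass} depends only on the set $P(d)$. A second potential snag is the role of the origin: draws are over blocks only, so no draw lands on $\rho$, and nothing needs adjusting.
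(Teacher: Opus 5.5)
Your proposal is correct and follows the paper's proof: you write the unique-resident-parent event as ``all draws avoid the other resident blocks'' minus the nested event ``all draws avoid every resident block,'' which gives $(1-R+\pi_b)^r-(1-R)^r$ per packet, then apply linearity and set $r=1$. Your explicit check that the route masses sum to one, a normalization the paper uses only implicitly, is a reasonable addition.
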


\begin{proof}
Block $b$ is the unique resident parent exactly when every draw avoids the resident blocks other than $b$, but not every draw avoids $b$ as well.  The first event has probability $(1-R+\pi_b)^r$ and the nested second event has probability $(1-R)^r$.  Their difference is the per-packet loss probability; linearity of expectation gives the result.  Substituting the two route masses proves the final statement.
\end{proof}

\begin{corollary}[Service-scale separation]
\label{cor:scale-separation}
Consider a sequence of two-hop instances with fixed redundancy $r$ and block count $B=|\Pi|$.  Suppose the resident probability mass is $R_B$ and every block atom satisfies $\max_b \pi_b=O(1/B)$.  If a resident block $b$ has $\pi_b=\beta_b/B+O(B^{-2})$, then
\[
\frac{\mathbb E[I(b)]}{N}
=\frac{r(1-R_B)^{r-1}\beta_b}{B}+O(B^{-2}).
\]
The zero-resident service fraction is exactly $(1-R_B)^r$, while the exactly-one-resident fraction is
\[
rR_B(1-R_B)^{r-1}+O(B^{-1}).
\]
Thus enlarging the block namespace leaves aggregate service-state probabilities stable up to finite-atom corrections, but dilutes each coordinate of the impact vector as $1/B$.  In the two-temperature experiment, quadrupling $B$ should therefore quarter per-block impact to first order.
\end{corollary}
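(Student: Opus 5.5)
The plan is to start from the exact formula of Proposition~\ref{prop:expected-impact}, $\mathbb E[I(b)]/N=(1-R_B+\pi_b)^r-(1-R_B)^r$, and expand it in the small parameter $\pi_b$. Put $x=1-R_B\in[0,1]$. The binomial theorem gives
\[
(x+\pi_b)^r-x^r=r x^{r-1}\pi_b+\sum_{j=2}^{r}\binom{r}{j}x^{r-j}\pi_b^j .
\]
Since $r$ is fixed and $0\le x\le 1$, the tail is bounded by $2^r\pi_b^2$. By hypothesis $\pi_b\le\max_c\pi_c=O(1/B)$, so the tail is $O(B^{-2})$. Substituting $\pi_b=\beta_b/B+O(B^{-2})$ into the linear term $rx^{r-1}\pi_b$ costs only another $rx^{r-1}O(B^{-2})=O(B^{-2})$. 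Together these give the first display.

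For the service-state fractions, I would model a packet's $r$ independent route draws. Each draw lands in the resident set with probability $R_B$.

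\emph{Zero-resident fraction.} The packet has no resident parent exactly when all $r$ draws miss the resident set. This event has probability $(1-R_B)^r$ with no approximation, because deduplicating repeated parents does not affect whether any resident block was drawn.

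\emph{Exactly-one-resident fraction.} The event is that the set of distinct resident parents is a singleton. Summing over resident $b$, the probability is $\sum_{b:z_b=1}\bigl[(x+\pi_b)^r-x^r\bigr]$, which is the per-packet loss probability from the proof of Proposition~\ref{prop:expected-impact}. Apply the expansion above termwise. The linear terms sum to $rx^{r-1}\sum_{b:z_b=1}\pi_b=rR_B(1-R_B)^{r-1}$. The quadratic remainders sum to at most $2^r\sum_{b}\pi_b^2\le 2^r\max_b\pi_b\sum_b\pi_b$. This is $O(1/B)$ because the masses sum to at most one. That is the claimed $O(B^{-1})$ correction, matching the factor one loses relative to the per-block estimate.

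The interpretive sentences then follow directly. The zero- and exactly-one-resident fractions depend on $B$ only through $R_B$, up to $O(1/B)$. The per-block impact has leading term $r(1-R_B)^{r-1}\beta_b/B$, so replacing $B$ by $4B$ with $R_B$ and $\beta_b$ fixed quarters it to first order. In the two-temperature experiment, $\beta_b\in\{1.8,0.2\}$ exactly and $R_B=k/H$ is essentially fixed by Theorem~\ref{thm:exact-residency}, so the $O(B^{-2})$ corrections are the only deviation.

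The main obstacle is making the error terms uniform. The constants in the $O(\cdot)$ must not depend on $b$ or on the resident set. The $B^{-1}$ aggregate error must use the summation bound $\sum_b\pi_b^2\le\max_b\pi_b$ rather than $B$ copies of a per-block $O(B^{-2})$. Beyond that, $r$ being fixed makes the binomial constants harmless.
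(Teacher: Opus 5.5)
Your proposal is correct and follows essentially the same route as the paper: a first-order expansion of the exact formula of Proposition~\ref{prop:expected-impact} in $\pi_b$, the exact all-draws-miss computation for the zero-resident fraction, and a sum of the disjoint unique-parent probabilities over resident blocks, with the quadratic remainders controlled by $\sum_b\pi_b^2\le(\sum_b\pi_b)\max_b\pi_b=O(B^{-1})$. Your explicit binomial bound $2^r\pi_b^2$ on the tail makes the paper's implicit $O(\pi_b^2)$ uniform in $b$.
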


\begin{proof}
The first display is the Taylor expansion of Proposition~\ref{prop:expected-impact} in the atom $\pi_b$:
\[
(1-R_B+\pi_b)^r-(1-R_B)^r
=r(1-R_B)^{r-1}\pi_b+O(\pi_b^2).
\]
Substituting $\pi_b=\beta_b/B+O(B^{-2})$ gives the coordinate claim.  A packet has no resident parent iff all $r$ draws avoid the resident probability mass, giving $(1-R_B)^r$ exactly.  The exactly-one-resident probability is the sum of the coordinate probabilities over resident blocks.  Summing the linear terms gives $rR_B(1-R_B)^{r-1}$; the remainder is bounded by a constant times $\sum_{b:z_b=1}\pi_b^2\le R_B\max_b\pi_b=O(B^{-1})$.
\end{proof}

\begin{proposition}[Zero-impact extreme-value threshold]
\label{prop:zero-threshold}
Fix one demand class containing $m$ resident blocks with the same per-packet impact probability $\zeta$.  Let $X_b$ be the impact count of block $b$ after $N$ independent packets, let
\[
Z=\bigl|\{b:X_b=0\}\bigr|,
\qquad
\mu=m(1-\zeta)^N.
\]
Then $\mathbb E[Z]=\mu$ and
\[
\frac{\mu}{1+\mu}
\le
\Pr\!\left[\min_b X_b=0\right]
\le
\min\{1,\mu\}.
\]
Thus the minimum coordinate undergoes a finite-stream transition near
$N\zeta=\log m$: more precisely,
\[
\mu=m\exp\!\left(-N\zeta+O(N\zeta^2)\right).
\]
Coordinate-wise $1/|\Pi|$ scaling can therefore coexist with an abrupt collapse of the observed minimum to zero when the namespace grows at fixed stream length.
\end{proposition}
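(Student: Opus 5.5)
The plan is to write $Z=\sum_b \mathbf 1[X_b=0]$ and use linearity of expectation, then bound $\Pr[Z>0]$ from above by Markov and from below by a second-moment inequality. Each packet independently contributes to $X_b$ with probability $\zeta$, so $X_b\sim\mathrm{Bin}(N,\zeta)$. Hence $\Pr[X_b=0]=(1-\zeta)^N$, and summing over the $m$ blocks gives $\mathbb E[Z]=\mu$. The event $\{\min_b X_b=0\}$ is exactly $\{Z\ge1\}$. The upper bound follows from $\Pr[Z\ge1]\le\mathbb E[Z]=\mu$ (Markov, or a union bound) together with the trivial bound $1$.

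For the lower bound I will use the Cauchy--Schwarz form of the second-moment method, $\Pr[Z\ge1]\ge(\mathbb E Z)^2/\mathbb E[Z^2]$. This reduces the claim to showing $\mathbb E[Z^2]\le\mu+\mu^2$. Expanding, $\mathbb E[Z^2]=\mu+\sum_{b\ne c}\Pr[X_b=0,X_c=0]$, so the key step is the pairwise bound $\Pr[X_b=0,X_c=0]\le(1-\zeta)^{2N}$. In the two-hop model the events ``$b$ is the unique resident parent'' and ``$c$ is the unique resident parent'' are disjoint for a single packet. Therefore the per-packet probability that neither event occurs is $1-2\zeta\le(1-\zeta)^2$. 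Independence across packets then gives $\Pr[X_b=0,X_c=0]=(1-2\zeta)^N\le(1-\zeta)^{2N}$. Summing over the $m(m-1)$ ordered pairs yields $\sum_{b\ne c}\le m^2(1-\zeta)^{2N}=\mu^2$, so $\Pr[Z\ge1]\ge\mu^2/(\mu+\mu^2)=\mu/(1+\mu)$.

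This negative-correlation step is the main obstacle. The statement only says the blocks share a common $\zeta$, so I will state explicitly that I use the per-packet disjointness of the unique-parent events, which holds by the indicator in Theorem~\ref{thm:onepass}, since a packet has at most one unique resident parent. Without that structure one would need $\Pr[X_b=0,X_c=0]\le\Pr[X_b=0]\Pr[X_c=0]$ from some other source. Disjointness gives it immediately, because $1-2\zeta\le 1-2\zeta+\zeta^2$.

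Finally, for the asymptotic form, I will write $(1-\zeta)^N=\exp(N\log(1-\zeta))$ and use $\log(1-\zeta)=-\zeta+O(\zeta^2)$ for $\zeta$ bounded away from $1$. This gives $\mu=m\exp(-N\zeta+O(N\zeta^2))$. The transition reading then follows from the two bounds. When $N\zeta-\log m\to+\infty$ (with $N\zeta^2$ small), $\mu\to0$, so the minimum is positive with high probability. When $N\zeta-\log m\to-\infty$, $\mu\to\infty$, so $\mu/(1+\mu)\to1$. The closing remark then combines this with Corollary~\ref{cor:scale-separation}: there $\zeta\propto1/|\Pi|$, and $m$ grows with $|\Pi|$, so $\log m$ overtakes $N\zeta$ at fixed $N$.
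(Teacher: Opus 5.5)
Your proposal is correct and follows the paper's proof essentially step for step. The paper uses the same ingredients: linearity for $\mathbb E[Z]=\mu$, Markov for the upper bound, the second-moment inequality with the pairwise bound $(1-2\zeta)^N\le(1-\zeta)^{2N}$ obtained from per-packet disjointness of the unique-resident-parent events, and the expansion of $\log(1-\zeta)$. The only cosmetic difference is that the paper phrases the pairwise step as nonpositive covariance giving $\operatorname{Var}(Z)\le\mu$, whereas you bound $\mathbb E[Z^2]\le\mu+\mu^2$ directly.
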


\begin{proof}
For each resident $b$, $X_b$ has marginal distribution $\operatorname{Bin}(N,\zeta)$, so $\Pr[X_b=0]=(1-\zeta)^N$ and linearity gives $\mathbb E[Z]=\mu$.  A packet cannot contribute unique-resident impact to two different blocks.  Hence, for $b\ne c$,
\[
\Pr[X_b=X_c=0]=(1-2\zeta)^N
\le (1-\zeta)^{2N},
\]
so the zero indicators have nonpositive pairwise covariance and
$\operatorname{Var}(Z)\le\sum_b\operatorname{Var}(\mathbf 1[X_b=0])\le\mu$.
Markov's inequality gives $\Pr[Z>0]\le\min\{1,\mu\}$.  The second-moment inequality gives
\[
\Pr[Z>0]\ge
\frac{\mathbb E[Z]^2}{\mathbb E[Z^2]}
\ge
\frac{\mu^2}{\mu+\mu^2}
=\frac{\mu}{1+\mu}.
\]
Finally, expanding $\log(1-\zeta)=-\zeta+O(\zeta^2)$ gives the threshold form.
\end{proof}

\begin{proposition}[Finite-stream deviation envelope]
\label{prop:deviation-envelope}
Fix one generated scenario and let
\[
\zeta_b=(1-R+\pi_b)^r-(1-R)^r
\]
be the per-packet expected impact probability from Proposition~\ref{prop:expected-impact}.  Let $X_b$ be the stored packet-impact counter for resident block $b$.  For each demand class $C$ define the observed envelope
\[
\Delta_C=\max_{b\in C:z_b=1}|X_b-N\zeta_b|.
\]
Then every additive block-price query $Q\subseteq\Pi$ satisfies the deterministic bound
\[
\left|
\sum_{b\in Q:z_b=1}X_b
-N\sum_{b\in Q:z_b=1}\zeta_b
\right|
\le
\sum_C |Q\cap C\cap\{b:z_b=1\}|\,\Delta_C .
\]
Consequently, if two candidate sets have an expected additive-price gap larger than the sum of their two right-hand sides, the finite stream cannot reverse their ordering.
\end{proposition}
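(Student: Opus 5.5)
The bound is deterministic, so the plan is a triangle-inequality argument applied block by block. No probabilistic reasoning is needed. The quantities $\zeta_b$ from Proposition~\ref{prop:expected-impact} serve only as fixed reference centres, and the envelopes $\Delta_C$ are defined from the realized counters themselves. The statement therefore reduces to decomposing the query into its resident members, grouping them by demand class, and bounding each summand by its class envelope. I take the demand classes to partition the block set, as the hot and cold classes do in the two-temperature experiment.

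\emph{Step 1.} Write the deviation as a single sum,
\[
\sum_{b\in Q:z_b=1}X_b-N\sum_{b\in Q:z_b=1}\zeta_b=\sum_{b\in Q:z_b=1}(X_b-N\zeta_b).
\]
Applying the triangle inequality bounds its absolute value by $\sum_{b\in Q:z_b=1}|X_b-N\zeta_b|$.

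\emph{Step 2.} Because the classes partition $\Pi$, this sum splits as $\sum_C\sum_{b\in Q\cap C:z_b=1}|X_b-N\zeta_b|$. Each summand is bounded by $\Delta_C$, because $\Delta_C$ is a maximum over exactly the resident blocks of $C$, and $b$ is one of them. The inner sum therefore has at most $|Q\cap C\cap\{b:z_b=1\}|\,\Delta_C$ as its value, and this gives the displayed bound. Nonresident blocks of $Q$ contribute to neither side, which matches their exclusion from both sums.

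\emph{Step 3 (ordering consequence).} Let the candidate sets be $Q_1,Q_2$. Write $S_i$ for the observed price, $\mathcal E_i$ for the expected price, and $\varepsilon_i$ for the right-hand side. Step 2 gives $|S_i-\mathcal E_i|\le\varepsilon_i$. Suppose $\mathcal E_1-\mathcal E_2>\varepsilon_1+\varepsilon_2$. Then
\[
S_1-S_2\ge(\mathcal E_1-\varepsilon_1)-(\mathcal E_2+\varepsilon_2)>0,
\]
so the observed ordering agrees with the expected one. The symmetric case is identical.

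The only point needing care is bookkeeping rather than estimation. The two sums in the statement must range over the same resident index set, and each resident $b\in Q$ must be charged to exactly one class. If classes were allowed to overlap, the argument would give an over-count, which is still a valid upper bound. If some resident block belonged to no class, the statement would need that class added. I would state the partition assumption explicitly before running the two-line argument.
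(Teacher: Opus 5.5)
Your proposal is correct and follows the paper's proof: you bound each resident summand by its class envelope $\Delta_C$, sum with the triangle inequality, and then apply the bound to both candidate sets to rule out a sign change. Your explicit remark that the demand classes must cover, and ideally partition, the resident blocks makes precise an assumption the paper leaves implicit.
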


\begin{proof}
For each resident block $b$ in class $C$, the definition of $\Delta_C$ gives $|X_b-N\zeta_b|\le\Delta_C$.  Summing this coordinate-wise inequality over the resident members of $Q$ and applying the triangle inequality proves the displayed bound.  Applying the same bound to two candidate sets shows that an expected gap larger than both possible deviations cannot change sign after replacing expectations by observed counters.
\end{proof}

Storing every generated packet would make the artifact scale with $N$, although the evaluation consumes only additive statistics.  A counter-based generator names packets without materializing the stream, following the standard parallel-reproducibility pattern~\cite{random123}.  The implementation applies the SplitMix64 mixing function~\cite{splitmix} to the seed--counter pair; it is used for deterministic synthetic generation, not for cryptography.  Let $g_{\theta,s}(i)$ be the packet generated from configuration $\theta$, seed $s$, and counter $i$.

\begin{theorem}[Replay-compressed exactness]
\label{thm:replay}
Let $(M,\oplus,0)$ be a commutative monoid and $\phi$ map one generated packet to its contribution in $M$.  For a counter interval $[a,a+N)$ define
\[
 A(\theta,s,a,N)=\bigoplus_{i=a}^{a+N-1}\phi(g_{\theta,s}(i)).
\]
The record $(\theta,s,a,N,A)$ is sufficient to reproduce the complete packet stream and to recover exactly every reported metric that factors through $A$.  It stores a constant number of records rather than $N$ packet records; its bit complexity is
\[
O\!\left(|\theta|+|s|+\log a+\log N+\operatorname{size}(A)\right),
\]
and replay takes $O(N)$ generator evaluations.  Concatenating adjacent intervals requires only monoid addition of their summaries.
\end{theorem}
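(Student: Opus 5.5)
The plan is to argue directly from determinism of the counter-based generator together with the algebra of the commutative monoid. I will establish reproduction, exactness of the reported metrics, the size bound, replay cost, and concatenation in that order.

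\emph{Reproduction.} Since $g_{\theta,s}$ is a deterministic function of $(\theta,s,i)$, the record determines each packet $g_{\theta,s}(i)$ for $i\in[a,a+N)$ by evaluating the generator at the counter $i$. No state beyond the tuple is needed; this is the counter-based property borrowed from~\cite{random123}, with SplitMix64 as the mixing function. Hence the complete stream is a function of $(\theta,s,a,N)$. Re-evaluating $\phi$ on each regenerated packet and folding with $\oplus$ recomputes $A$, so a verifier can check the stored summary against the stream.

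\emph{Exactness.} Any reported metric that factors through $A$ has the form $f(A)$, so it is recovered exactly from the stored record without replay. This covers the impact counters of Theorem~\ref{thm:onepass}, where $M=\mathbb Z^{\Pi}$ under coordinatewise addition. The banked counters of Proposition~\ref{prop:banks} also factor through $A$, because summing the banks gives the same monoid element. Commutativity and associativity make $A$ independent of evaluation order, so parallel or interleaved accumulation yields the same record, exactly as in the proofs of Theorem~\ref{thm:onepass} and Proposition~\ref{prop:banks}.

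\emph{Size and cost.} Encoding $a$ and $N$ in binary costs $O(\log a+\log N)$ bits; adding $|\theta|$, $|s|$, and $\operatorname{size}(A)$ gives the stated bound. The record is one tuple, independent of $N$ except through $\log N$. Replay evaluates $g$ exactly once per counter, so it takes $N$ generator evaluations plus $N$ applications of $\phi$ and $\oplus$.

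\emph{Concatenation.} For adjacent intervals $[a,a+N_1)$ and $[a+N_1,a+N_1+N_2)$, associativity splits the fold, giving
\[
A(\theta,s,a,N_1+N_2)=A(\theta,s,a,N_1)\oplus A(\theta,s,a+N_1,N_2).
\]
The combined record is therefore $(\theta,s,a,N_1+N_2)$ together with the $\oplus$ of the two summaries.

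The main obstacle is not the algebra but stating precisely what ``exact'' covers. It applies only to metrics that factor through $A$, and it assumes that $\phi$ and the counters do not overflow. If $M$ is realized with fixed-width integers, I would either take $M=\mathbb Z^{\Pi}$ with a stated counter width bounding $\operatorname{size}(A)$, or note the no-overflow proviso as in Theorem~\ref{thm:onepass}. I would also stress that $\operatorname{size}(A)$ may itself be large, since it is $\Theta(|\Pi|)$ words for impact vectors. The bound is meant to remove the dependence on $N$, not on $|\Pi|$, and that is what Theorem~\ref{thm:summary-lb} then quantifies.
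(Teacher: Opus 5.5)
Your proposal is correct and follows the paper's own proof essentially step for step: determinism of $g_{\theta,s}$ gives reproduction, exactness is immediate for metrics of the form $f(A)$, binary encoding of $a$ and $N$ gives the size bound, and associativity of $\oplus$ gives the concatenation identity. Your extra caveats (the no-overflow proviso and $\operatorname{size}(A)$ being $\Theta(|\Pi|)$ words) are sound and match how the paper treats these points in Theorem~\ref{thm:onepass}, Appendix~\ref{app:artifact}, and Theorem~\ref{thm:summary-lb}. One small precision: for Proposition~\ref{prop:banks}, only the bank-reduced vector factors through $A$, not the individual per-bank contents.
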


\begin{proof}
The generator is a deterministic function of $(\theta,s,i)$, so enumerating the stored interval reconstructs each packet in order.  A metric that factors through $A$ depends only on the displayed monoid fold and is therefore unchanged when the raw stream is replaced by its summary.  Encoding the interval endpoints needs logarithmically many bits, and the tuple contains no per-packet term, establishing the size bound.  For adjacent intervals, associativity gives $A[a,c)=A[a,b)\oplus A[b,c)$.
\end{proof}

\begin{theorem}[Exact-summary lower bound]
\label{thm:summary-lb}
Let $B=|\Pi|\ge2$.  Even in the two-hop case with one parent per packet and every block resident, any deterministic summary that answers the exact packet impact of every block after a stream of $N$ packets requires at least
\[
\left\lceil\log_2 \binom{N+B-1}{B-1}\right\rceil
\]
bits in the worst case.  Storing the complete impact vector uses at most $B\lceil\log_2(N+1)\rceil$ bits.
\end{theorem}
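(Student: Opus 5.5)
The plan is a counting (pigeonhole) argument on the set of realizable impact vectors, followed by a direct bound for the explicit encoding. Work in the restricted two-hop setting named in the statement. There are $B$ blocks, each demand endpoint $d$ has exactly one parent $P(d)=\{b_d\}$, and every $z_b=1$. By Theorem~\ref{thm:onepass} with unit weights, every packet has its parent as unique resident parent. Hence after a stream of $N$ packets we get $I(b)=\#\{i: b_{d_i}=b\}$, and the impact vector is the vector of occupancy counts.

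First I show that every vector $x\in\mathbb Z_{\ge0}^B$ with $\sum_b x_b=N$ occurs. Fix one private endpoint per block, so $d^{(b)}$ has parent $b$. The stream consisting of $x_b$ copies of $d^{(b)}$ for each $b$, in any order, has length $N$ and impact vector exactly $x$. By stars and bars, the number of such compositions is $\binom{N+B-1}{B-1}$.

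Next, consider any deterministic summary map $\sigma$ from streams to bit strings from which the exact impact of every block can be answered. It must be injective on impact vectors: if two streams with different vectors $x\ne x'$ had the same summary, some block query would receive the same answer on both, contradicting exactness. So $\sigma$ induces an injection from $\binom{N+B-1}{B-1}$ vectors into summaries. If every summary had at most $\ell$ bits, there would be at most $2^\ell$ summaries in a fixed-length encoding. This forces $\ell\ge\lceil\log_2\binom{N+B-1}{B-1}\rceil$ in the worst case. The main subtle point is the variable-length case: strings of length at most $\ell-1$ number $2^\ell-1$, which costs at most one bit. I will state the bound for fixed-length encodings, or equivalently measure the worst-case length over a prefix-free or self-delimiting format. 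This is the step to phrase carefully, and it is the only genuine obstacle.

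Finally, for the upper bound, each coordinate satisfies $0\le I(b)\le N$, so each takes one of $N+1$ values. It can therefore be written in $\lceil\log_2(N+1)\rceil$ bits, giving $B\lceil\log_2(N+1)\rceil$ bits for the whole vector. As a sanity check, $\log_2\binom{N+B-1}{B-1}\le (B-1)\log_2(N+1)$ follows from $\binom{N+B-1}{B-1}\le (N+1)^{B-1}$. This holds because each of the first $B-1$ coordinates lies in $\{0,\dots,N\}$ and determines the last. So the two bounds are consistent and differ by at most a factor near one when $N\gg B$.
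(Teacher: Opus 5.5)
Your proposal is correct and is essentially the paper's proof: stars and bars counts the $\binom{N+B-1}{B-1}$ realizable impact vectors, and exactness forces the summary to give them distinct states. $B$ fixed-width counters then give the upper bound. Your explicit private-endpoint realization and your fixed-length/prefix-free caveat (the paper tacitly equates $\ell$ bits with $2^\ell$ states) are sound refinements.

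The one slip is in the closing sanity remark, which the theorem does not need. Since $\lceil\log_2\binom{N+B-1}{B-1}\rceil\le(B-1)\lceil\log_2(N+1)\rceil$, the two bounds always differ by a factor of at least $B/(B-1)$, exactly $2$ for $B=2$. So they are close only when $B$ is large and $\log N\gg\log B$.
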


\begin{proof}
With one resident parent per packet, the impact vector is any nonnegative integer vector $(x_1,\ldots,x_B)$ satisfying $\sum_b x_b=N$.  Stars and bars gives $\binom{N+B-1}{B-1}$ such vectors.  A summary from which every coordinate is recovered exactly must assign different states to different vectors, giving the logarithmic lower bound.  Each coordinate lies in $\{0,\ldots,N\}$, so $B$ fixed-width counters give the upper bound.
\end{proof}

The summary monoid contains two $|\Pi|$-dimensional impact vectors, packet and byte totals, counts with zero, one, or multiple resident parents, and two 64-bit replay fingerprints.  The fingerprints are integrity checks rather than substitutes for exact counters.  The ancillary artifact additionally stores 2,048 evenly spaced decoded packet records per scenario, full per-block impact vectors, progress logs, five-second device telemetry, source and binary hashes, independent validation output, and the deviation envelopes of Proposition~\ref{prop:deviation-envelope}.  Appendix~\ref{app:artifact} specifies the binary format and verifier.

\subsection{Experimental protocol}
\label{ssec:protocol}

\paragraph{Two-stage design.}
An initial Bernoulli-residency calibration ran nine five-minute scenarios before checkpointing.  It processed 37.24 trillion packets (25.86~PiB of simulated payload) over 0.75 aggregate GPU-hours and exposed seed-to-seed variation in resident count and resident probability mass.  Its complete vectors, telemetry, progress logs, and validation report remain in the ancillary checkpoint.

The production stage replaces Bernoulli membership with Theorem~\ref{thm:exact-residency} and enlarges the matrix to $|\Pi|\in\{4096,16384,65536\}$, path redundancy $r\in\{1,2,4,8\}$, and target resident fraction $q\in\{0.50,0.70,0.90\}$.  Parent demand follows a controlled two-temperature mixture: one half of the blocks receives 90\% of route draws and the other half receives 10\%.  Thus $R$ is fixed by $(|\Pi|,q)$ rather than by seed, while seeds still vary route draws and victim metadata. Each of the 36 configurations has four seed replicates.  The 144 scenarios are deterministically sharded over three NVIDIA L40 devices; each scenario runs for five minutes and each process allocates 41~GiB for the replay buffer.  Warp classes are aggregated according to Proposition~\ref{prop:warp}, then written through the 32-bank layout of Proposition~\ref{prop:banks}.  Impact pricing is compared with synthetic snapshot LRU, FIFO, and fixed-seed random total orders generated independently of the measured demand topology.  These orders instantiate the topology-oblivious class in Definition~\ref{def:oblivious}; they are not executions of cache policies against an external request trace.  Each deterministic total order is realizable by an appropriate pre-history, but that pre-history is not inferred from the evaluation stream.  Consequently the reported ratio is a controlled topology-information gap under this generator, not a claim that deployed LRU or FIFO incurs the same factor.  For an order's victim $p$, the ratio is $(I(p)+1)/(I_{\min}+1)$, with a one-packet pseudocount only to define zero-impact cases; ratios are aggregated across replicates by the geometric mean.  Hardware telemetry is sampled every five seconds.  The complete matrix is reported in one table after every selected victim has been checked against the stored per-block vectors.

% BEGIN GENERATED GPU TABLE
\clearpage
\begin{table}[!p]
\centering
\fontsize{6.4}{7.1}\selectfont
\setlength{\tabcolsep}{1.35pt}
\renewcommand{\arraystretch}{0.82}
\caption{Complete replay-compressed GPU matrix. Each row aggregates four five-minute seed replicates. Pkt is the packet count in trillions and traffic is simulated payload in PiB. $q$ is the target resident fraction enforced separately in the hot and cold halves by an exact affine permutation; $n$ is the completed replicate count. $U_0$ and $U_1$ are packets with zero and exactly one resident parent. $I_{\min}$ is minimum packet impact per billion packets. Synthetic topology-oblivious order entries give geometric-mean damage relative to minimum impact; they are not deployment trace replays. Per-seed minima and maxima remain in the ancillary CSV. Hardware columns are five-second telemetry means except $T_{\max}$.}
\label{tab:gpu}
\begin{tabular*}{\textwidth}{@{\extracolsep{\fill}}rrr rrr rrr rrr rrrr@{}}
\toprule
\multicolumn{3}{c}{Topology} & \multicolumn{3}{c}{Scale} & \multicolumn{3}{c}{Service/impact} & \multicolumn{3}{c}{Damage ratio} & \multicolumn{4}{c}{L40 telemetry}\\
\cmidrule(lr){1-3}\cmidrule(lr){4-6}\cmidrule(lr){7-9}\cmidrule(lr){10-12}\cmidrule(lr){13-16}
$|\Pi|$ & $r$ & $q$ [$n$] & Pkt (T) & traffic (PiB) & Gpkt/s & $U_0$ (\%) & $U_1$ (\%) & $I_{\min}$ (ppb) & LRU & FIFO & random & GPU (\%) & VRAM (GiB) & W & $T_{\max}$ ($^\circ$C)\\
\midrule
4,096 & 1 & 0.5 [4] & 24.54 & 17.05 & 20.44 & 50.000 & 50.000 & 48818.812 & 5.20 & 3.00 & 1.73 & 100.0 & 41.46 & 291.5 & 69\\
4,096 & 1 & 0.7 [4] & 22.24 & 15.45 & 18.52 & 29.980 & 70.020 & 48817.810 & 1.73 & 1.73 & 5.20 & 100.0 & 41.46 & 294.3 & 71\\
4,096 & 1 & 0.9 [4] & 20.56 & 14.28 & 17.13 & 10.010 & 89.990 & 48817.306 & 3.00 & 3.00 & 5.20 & 100.0 & 41.46 & 293.4 & 70\\
4,096 & 2 & 0.5 [4] & 18.89 & 13.12 & 15.73 & 25.000 & 50.020 & 48819.689 & 5.20 & 1.73 & 5.20 & 100.0 & 41.46 & 297.3 & 71\\
4,096 & 2 & 0.7 [4] & 19.25 & 13.37 & 16.03 & 8.988 & 42.012 & 29272.104 & 3.00 & 1.73 & 3.00 & 100.0 & 41.46 & 293.8 & 72\\
4,096 & 2 & 0.9 [4] & 21.05 & 14.62 & 17.53 & 1.002 & 18.052 & 9772.780 & 5.21 & 1.73 & 1.73 & 100.0 & 41.46 & 294.3 & 71\\
4,096 & 4 & 0.5 [4] & 13.88 & 9.64 & 11.56 & 6.250 & 25.030 & 24408.992 & 1.73 & 1.73 & 3.00 & 100.0 & 41.46 & 298.2 & 73\\
4,096 & 4 & 0.7 [4] & 14.46 & 10.04 & 12.04 & 0.808 & 7.562 & 5260.150 & 9.02 & 1.73 & 3.01 & 100.0 & 41.46 & 297.4 & 74\\
4,096 & 4 & 0.9 [4] & 14.95 & 10.38 & 12.45 & 0.010 & 0.363 & 195.256 & 1.74 & 5.24 & 5.24 & 100.0 & 41.46 & 297.6 & 74\\
4,096 & 8 & 0.5 [4] & 8.69 & 6.04 & 7.23 & 0.391 & 3.134 & 3048.969 & 1.74 & 1.74 & 3.01 & 100.0 & 41.46 & 298.1 & 73\\
4,096 & 8 & 0.7 [4] & 8.73 & 6.07 & 7.27 & 0.007 & 0.123 & 84.417 & 5.26 & 1.75 & 1.75 & 100.0 & 41.46 & 298.7 & 75\\
4,096 & 8 & 0.9 [4] & 8.77 & 6.09 & 7.29 & 0.000 & 0.000 & 0.026 & 2.62 & 7.90 & 4.54 & 100.0 & 41.46 & 298.5 & 76\\
\midrule
16,384 & 1 & 0.5 [4] & 24.02 & 16.68 & 20.01 & 50.000 & 50.000 & 12202.111 & 1.73 & 5.20 & 3.00 & 100.0 & 41.46 & 291.8 & 72\\
16,384 & 1 & 0.7 [4] & 22.08 & 15.33 & 18.39 & 30.005 & 69.995 & 12201.181 & 3.00 & 3.00 & 5.20 & 100.0 & 41.46 & 292.9 & 72\\
16,384 & 1 & 0.9 [4] & 20.44 & 14.20 & 17.02 & 9.998 & 90.002 & 12200.807 & 1.00 & 5.20 & 5.20 & 100.0 & 41.46 & 293.2 & 70\\
16,384 & 2 & 0.5 [4] & 18.88 & 13.11 & 15.72 & 25.000 & 50.005 & 12201.459 & 5.20 & 1.00 & 3.00 & 100.0 & 41.46 & 294.6 & 72\\
16,384 & 2 & 0.7 [4] & 19.10 & 13.26 & 15.90 & 9.003 & 42.011 & 7320.851 & 1.73 & 1.73 & 5.20 & 100.0 & 41.46 & 293.9 & 72\\
16,384 & 2 & 0.9 [4] & 20.97 & 14.57 & 17.46 & 1.000 & 18.005 & 2438.391 & 5.20 & 3.00 & 1.73 & 100.0 & 41.46 & 294.6 & 73\\
16,384 & 4 & 0.5 [4] & 13.81 & 9.59 & 11.50 & 6.250 & 25.008 & 6098.958 & 3.00 & 3.00 & 1.73 & 100.0 & 41.46 & 298.1 & 72\\
16,384 & 4 & 0.7 [4] & 14.38 & 9.99 & 11.97 & 0.811 & 7.567 & 1316.693 & 3.01 & 5.21 & 1.74 & 100.0 & 41.46 & 297.9 & 75\\
16,384 & 4 & 0.9 [4] & 14.78 & 10.26 & 12.31 & 0.010 & 0.360 & 48.353 & 1.01 & 3.03 & 3.03 & 100.0 & 41.46 & 298.2 & 75\\
16,384 & 8 & 0.5 [4] & 8.66 & 6.02 & 7.21 & 0.391 & 3.127 & 760.660 & 3.01 & 1.00 & 5.22 & 100.0 & 41.46 & 298.8 & 74\\
16,384 & 8 & 0.7 [4] & 8.68 & 6.03 & 7.23 & 0.007 & 0.123 & 20.994 & 3.06 & 3.06 & 9.17 & 100.0 & 41.46 & 298.3 & 74\\
16,384 & 8 & 0.9 [4] & 8.76 & 6.08 & 7.29 & 0.000 & 0.000 & 0.003 & 7.83 & 5.30 & 8.20 & 100.0 & 41.46 & 298.6 & 75\\
\midrule
65,536 & 1 & 0.5 [4] & 23.94 & 16.63 & 19.95 & 50.000 & 50.000 & 3048.868 & 1.00 & 3.00 & 5.20 & 100.0 & 41.46 & 292.4 & 72\\
65,536 & 1 & 0.7 [4] & 21.87 & 15.19 & 18.22 & 29.999 & 70.001 & 3048.514 & 3.00 & 1.73 & 9.01 & 100.0 & 41.46 & 294.6 & 71\\
65,536 & 1 & 0.9 [4] & 20.23 & 14.05 & 16.85 & 10.001 & 89.999 & 3048.526 & 5.20 & 1.73 & 3.00 & 100.0 & 41.46 & 293.9 & 70\\
65,536 & 2 & 0.5 [4] & 18.88 & 13.11 & 15.72 & 25.000 & 50.001 & 3048.701 & 3.00 & 3.00 & 3.00 & 100.0 & 41.46 & 303.5 & 73\\
65,536 & 2 & 0.7 [4] & 18.99 & 13.19 & 15.82 & 8.999 & 42.001 & 1828.566 & 3.00 & 3.00 & 5.20 & 100.0 & 41.46 & 290.4 & 72\\
65,536 & 2 & 0.9 [4] & 20.40 & 14.17 & 16.99 & 1.000 & 18.003 & 609.013 & 1.00 & 3.01 & 3.01 & 100.0 & 41.46 & 295.1 & 72\\
65,536 & 4 & 0.5 [4] & 13.74 & 9.54 & 11.44 & 6.250 & 25.002 & 1523.330 & 5.21 & 1.00 & 3.01 & 100.0 & 41.46 & 297.6 & 74\\
65,536 & 4 & 0.7 [4] & 13.96 & 9.69 & 11.62 & 0.810 & 7.560 & 328.328 & 5.22 & 9.03 & 3.01 & 100.0 & 41.46 & 298.7 & 74\\
65,536 & 4 & 0.9 [4] & 14.47 & 10.05 & 12.05 & 0.010 & 0.360 & 11.959 & 5.31 & 1.77 & 3.06 & 100.0 & 41.46 & 298.2 & 74\\
65,536 & 8 & 0.5 [4] & 8.52 & 5.92 & 7.09 & 0.391 & 3.126 & 189.430 & 1.01 & 1.74 & 1.01 & 100.0 & 41.46 & 297.6 & 76\\
65,536 & 8 & 0.7 [4] & 8.62 & 5.99 & 7.18 & 0.007 & 0.122 & 5.139 & 1.04 & 5.39 & 3.12 & 100.0 & 41.46 & 298.9 & 76\\
65,536 & 8 & 0.9 [4] & 8.71 & 6.05 & 7.25 & 0.000 & 0.000 & 0.000 & 11.17 & 20.32 & 28.86 & 100.0 & 41.46 & 298.0 & 75\\
\midrule
\multicolumn{3}{r}{Completed total / mean} & 582.90 & 404.86 & 13.48 & -- & -- & -- & -- & -- & -- & 100.0 & 41.46 & 296.2 & 76\\
\bottomrule
\end{tabular*}
\end{table}
\clearpage
% END GENERATED GPU TABLE

% BEGIN GENERATED GPU FINDINGS
\paragraph{Results.} The three GPUs processed 582.90 trillion packets, representing 404.86~PiB of simulated payload and 8482.36~TiB under explicit 16-byte counter--event materialization. Mean throughput was 13.48~Gpacket/s per configuration (range 7.09--20.44). The cold-class prediction of Proposition~\ref{prop:expected-impact} had mean absolute relative error 0.0128\% and maximum error 0.1609\%. Across the 36 configurations, the synthetic topology-oblivious orders labelled LRU, FIFO, and fixed-seed random had geometric-mean damage 2.92$\times$, 2.76$\times$, and 3.66$\times$ relative to the minimum-impact victim.  These are controlled information-gap measurements under the generator, not deployment trace comparisons.  The largest per-seed ratios were 53.00$\times$, 48.00$\times$, and 51.00$\times$, respectively.

\paragraph{Scale separation.} The block-scale comparison from $|\Pi|=16{,}384$ to $|\Pi|=65{,}536$ tests Corollary~\ref{cor:scale-separation} at fixed redundancy and resident fraction.  The per-coordinate minimum-impact ppb ratio over the 11 pairs with nonzero measured minima at both scales averaged 0.2490 (range 0.2448--0.2499), close to the first-order quartering predicted by the $1/|\Pi|$ atom mass. The remaining pair entered the finite-count tail: at $r=8,q=0.9$ the larger system contained a zero-impact empirical minimum, so no ratio was assigned. Across the same pairs, zero- and single-resident service fractions changed by at most 0.0101 percentage points; the aggregate service state is therefore nearly scale-invariant while individual eviction coordinates dilute.

\paragraph{Finite-count minimum transition.} The only zero-impact coordinates occurred in the cold $|\Pi|=65,536$, $r=8$, $q=0.9$ class.  Its coordinate mean was only 5.28--5.40 packets, while each frame contained 29,491 resident cold coordinates.  Proposition~\ref{prop:zero-threshold} therefore predicts 132.7--149.8 zero coordinates per frame and gives a per-frame lower bound 99.25\% on a zero minimum.  All four frames crossed the threshold, with 120--145 zeros each (532 observed versus 578 expected in total).  At the preceding $|\Pi|=16,384$ scale the coordinate mean remained 21.25--21.68, the per-frame zero-minimum upper bound was at most $4.34\times10^{-6}$, and no zero occurred.  This extreme-value transition, rather than a failure of the coordinate expectation, explains the single scale pair whose minimum-impact ratio is undefined.

\paragraph{Device utilization.} Five-second telemetry recorded mean GPU utilization 100.0\%, mean allocated memory 41.46~GiB per device, and mean board power 296.2~W; the maximum observed temperature was 76$^\circ$C.  These measurements are retained as observations rather than folded into the replay summary.  Warp aggregation reduced the number of global packet/byte impact update pairs by only 0.124\% on average (maximum 0.622\%), confirming the collision-sparse regime of Corollary~\ref{cor:warp-sparse}.  It preserved both conservation identities but supplied no material production benefit; the separately measured 5.54\% crossover gain belongs to 32-bank sharding. The finite-stream deviation verifier found maximum resident-cardinality rounding error 0.80 blocks (0.000195 of the block set), mean absolute coordinate-level theory error 1.1725\%, and maximum normalized residual 5.95 standard-error units over 2,890,144 resident coordinates.  In the class containing that maximum, the RMS normalized residual was 1.000 and the 95th percentile was 1.977. The largest relative percentage error occurred in the $|\Pi|=65536$, $r=8$, $q=0.9$ cold class, where the expected impact was only 0.0024~ppb; the verifier therefore reports it as a finite-count tail effect rather than as a resident-mass error.
% END GENERATED GPU FINDINGS

\section{Confluent invalidation and a replicated index}
\label{sec:confluence}

Invalidation drops residency from a seed and forwards that change across transfer links.  To model asynchronous propagation as a replicated index, fix $S\subseteq \V$ and let
\[
F(X)=X\cup\{\,v\in \V : \exists u\in X,\ u\to v\in E\,\}.
\]

\begin{theorem}[Confluence and coordination-freeness]
\label{thm:confluence}
The invalidated states ordered by inclusion form a join-semilattice under $\cup$, on which $F$ is monotone and inflationary; its least fixed point above $S$ is the forward-reachable set $R=\{v:\text{$v$ is $E$-reachable from $S$}\}$. Every fair execution from $S$, under any order, batching by union, or duplication, eventually stabilizes at $R$. Invalidation is therefore confluent and coordination-free in the sense of monotone eventual consistency.
\end{theorem}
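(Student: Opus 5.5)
The plan has three steps: the order-theoretic facts, the identification of the least fixed point with $R$, and a model of fair execution that gives stabilization. Write $L=\{X:S\subseteq X\subseteq\V\}$.

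\textbf{Lattice and monotonicity.} Because $\V$ is finite, $L$ is a finite complete lattice. It is closed under $\cup$, so it is a join-semilattice with bottom $S$. The map $F$ is inflationary, since $X\subseteq F(X)$ by construction. It is monotone, because $X\subseteq Y$ implies that every successor of $X$ is a successor of $Y$. In fact $F$ distributes over unions: $F(X\cup Y)=F(X)\cup F(Y)$, because the successor image of a union is the union of the successor images. Distributivity is what will make batching and duplication harmless later.

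\textbf{The least fixed point is $R$.} I will use Kleene iteration $S\subseteq F(S)\subseteq F^2(S)\subseteq\cdots$. This chain stabilizes within $|\V|$ steps by finiteness, and an induction on $k$ shows that $F^k(S)$ is exactly the set of vertices reachable from $S$ by a path of at most $k$ edges. Hence the limit is $R$.

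$R$ is a fixed point: if $v\in R$ and $v\to w\in E$, then $w$ is reachable from $S$, so $w\in R$.

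$R$ is the least fixed point above $S$. Take any $X\supseteq S$ with $F(X)=X$; then $X$ is closed under successors. Induction along any path from $S$ places each of its vertices in $X$, so $R\subseteq X$.

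\textbf{Executions and fairness.} I model an execution as a sequence of states $X_0=S\subseteq X_1\subseteq\cdots$. Each step replaces $X_t$ by $X_t\cup Y$, where $Y$ is a union of successor sets of elements of $X_t$. This covers single-edge delivery, batches merged by union, and redelivery of an old message.

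Two invariants follow by induction on $t$, and I will check them in this order:
\begin{enumerate}
\item $X_t\subseteq R$, because every added vertex is a successor of a vertex already in $X_t\subseteq R$, and $R$ is successor-closed.
\item $X_t$ only grows.
\end{enumerate}
Duplicated or reordered deliveries only add elements that are already in $R$. Since $\cup$ is idempotent, commutative and associative, the delivery order and any duplication do not change the state reached once the same messages have been delivered.

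\textbf{Stabilization, and the main obstacle.} The chain is increasing in a finite set, so it stabilizes at some $X_\infty\subseteq R$. The main obstacle is making fairness precise enough to force $X_\infty=R$. I will adopt the standard hypothesis that every edge $u\to v$ with $u$ in the state is eventually delivered.

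Suppose $X_\infty\neq R$. Then $X_\infty\subsetneq R$, so $X_\infty$ is not successor-closed, because $R$ is the least successor-closed set containing $S$. Choose an edge $u\to v$ with $u\in X_\infty$ and $v\notin X_\infty$. After stabilization $u$ stays in the state forever, so fairness eventually delivers this edge, which puts $v$ into the state. That contradicts stabilization, so $X_\infty=R$.

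Coordination-freeness then follows from two facts: the final state $R$ is independent of the schedule, and every update is a monotone join. This is the CALM and CRDT criterion for monotone eventual consistency.
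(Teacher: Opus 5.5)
Your proposal is correct and follows the same route as the paper: it establishes the semilattice and monotone/inflationary facts, identifies the least fixed point above $S$ with the forward-reachable set $R$, shows that every execution adds only nodes of $R$ and that fairness eventually adds every node of $R$, uses idempotence of $\cup$ for duplication and reordering, and appeals to the CALM/CRDT criterion for coordination-freeness. The differences are only in presentation: you compute the least fixed point by Kleene iteration plus a direct successor-closure argument instead of citing Knaster--Tarski, and you obtain completeness from a boundary edge of the stabilized set instead of the paper's induction along shortest paths, which has the merit of stating the fairness hypothesis explicitly.
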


\begin{proof}
The join $\cup$ is commutative, associative and idempotent, so the states form a join-semilattice; $F$ is monotone and inflationary, so by the Knaster--Tarski theorem~\cite{tarski} it has least fixed point $\bigcup_{n\ge0}F^n(S)=R$ above $S$. Any execution adds only nodes of $R$ (a new node has an invalidated in-neighbour, inductively in $R$), and fairness adds every node of $R$ (each has a predecessor added first along a shortest path). Idempotence makes duplication and reordered batching immaterial. Monotonicity with the semilattice structure is exactly the condition for coordination-free convergence~\cite{calm,crdt,ameloot}; an invalidated location is a tombstone, moving only upward in the order.
\end{proof}

\begin{theorem}[Delta-state replicated index]
\label{thm:delta}
Represent each replica's invalidated set as a grow-only set and propagate \emph{deltas}, the newly invalidated identifiers, by the join $\cup$.  Under fair asynchronous delivery, every replica eventually converges to $R$, despite reordering or duplication.  If the replica overlay has diameter $d$ and a synchronous round delivers across every overlay edge, convergence occurs within $d$ rounds after the last new identifier is introduced.  The payload emitted by a replica in one round is bounded by the identifiers newly added since its previous emission.
\end{theorem}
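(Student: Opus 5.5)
The plan is to reduce Theorem~\ref{thm:delta} to the convergence of a grow-only set under join, and then bound the round count separately.

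\emph{Model.} Each replica $i$ holds a state $X_i\subseteq\V$. Local steps apply $F$ from Theorem~\ref{thm:confluence} to the replica's own state, or add seed identifiers from $S$. A replica emits the delta $\delta_i=X_i\setminus X_i^{\mathrm{prev}}$, the identifiers added since its last emission. A receiver applies $X_j\leftarrow X_j\cup\delta$.

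\emph{Two invariants, by induction on execution steps.} First, every state and every in-flight delta is contained in $R$. This holds because local steps stay inside $R$ by the argument of Theorem~\ref{thm:confluence}, and a union of subsets of $R$ is again a subset of $R$. Second, because $\cup$ is commutative, associative and idempotent, the state of a replica depends only on the set of deltas it has received. Their order and multiplicity do not matter, so reordering and duplication are immaterial.

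\emph{Eventual convergence.} Fairness here means two things: every replica applies $F$ infinitely often, and every emitted delta is eventually delivered across every overlay edge, provided the overlay is connected. The key observation is that the union $U=\bigcup_i X_i$ evolves as a fair execution of Theorem~\ref{thm:confluence}, so it reaches $R$.

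\emph{Distributing $U$ to every replica.} Take any identifier $v$ that enters some $X_i$. It appears in the next delta emitted by $i$, since it was not in $X_i^{\mathrm{prev}}$. Fair delivery puts it in every neighbour, and the neighbour then re-emits it because it is new there; if it was already present, that neighbour already has it. Induction along overlay paths shows every replica eventually contains $v$. Since $R$ is finite, after finitely many steps every replica equals $R$, and no replica can exceed $R$ by the first invariant.

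\emph{Round bound.} Let $t_0$ be the round after which no new identifier is introduced anywhere. I read this as: all local closure has completed, so $U=R$ at $t_0$. Claim: after $k$ synchronous rounds following $t_0$, every replica within overlay distance $k$ of a holder of $v$ contains $v$. The induction step works as follows. A replica at distance $k-1$ obtained $v$ by round $k-1$. Either it emitted $v$ in round $k$ as part of its delta, or it held $v$ before $t_0$. In the latter case it already emitted $v$ in an earlier round, and that emission must be accounted for.

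\emph{Main obstacle.} This accounting is the hardest part, since a naive induction loses the case where $v$ was emitted before $t_0$. I would handle it by strengthening the invariant: after every synchronous round, each edge has delivered the sender's full state as of the previous round. This holds because the concatenation of deltas equals the state. With that invariant, distance $\le d$ gives convergence within $d$ rounds.

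\emph{Payload.} The payload bound follows directly from the definition of $\delta_i$ as the identifiers added since the previous emission.
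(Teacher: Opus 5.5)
Your proposal is correct and follows the same route as the paper: the ACI join on a grow-only set makes reordering and duplication immaterial, fair delivery spreads every delta so that all replicas reach the global join, and that join is $R$ by Theorem~\ref{thm:confluence}; the synchronous bound comes from one-hop-per-round propagation along shortest overlay paths, and the payload bound from sending only additions. The ``main obstacle'' you flag arises only because you start the induction at $t_0$. The paper instead times each identifier from its own introduction: a replica emits $v$ in the round after it first receives it, so $v$ reaches overlay distance $m$ within $m$ rounds, which gives the $d$-round bound directly, though your cumulative-delta invariant is an equally valid repair.
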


\begin{proof}
A grow-only set under $\cup$ is a state-based CRDT whose merges are commutative, associative and idempotent~\cite{crdt,delta}. Fair delivery eventually carries every delta to every replica, so all states converge to their global join, which is $R$ by Theorem~\ref{thm:confluence}.  No finite bound follows for an arbitrary asynchronous schedule: fairness may delay a message for an unbounded time.  Under the stated synchronous full-edge schedule, an identifier advances at least one edge per round and reaches every replica along a shortest path of length at most $d$.  Sending only additions gives the payload bound.
\end{proof}

Theorem~\ref{thm:delta} supplies order-independence and bounded delta payloads for eventual invalidation.  Certificates computed before convergence are local snapshots; after every replica has joined all deltas they agree on the stable set $R$.

\section{Recoverable versus irrecoverable eviction}
\label{sec:reinstate}

Eviction is of two kinds. A \emph{recoverable} eviction demotes a block to a slower tier from which it can be reloaded; an \emph{irrecoverable} one releases the block for reuse and is absorbing.  The origin remains resident.  A configuration is therefore a block map
\[
\gamma:\Pi\longrightarrow\{\textsf{res},\textsf{rec},\textsf{free}\}.
\]
The operations $\mathsf{demote}(Q)$, $\mathsf{reload}(Q)$, and $\mathsf{free}(Q)$ act on a set of blocks $Q\subseteq\Pi$: demotion maps $\textsf{res}$ to $\textsf{rec}$, reload maps $\textsf{rec}$ to $\textsf{res}$, and free maps either non-free state to $\textsf{free}$.  Reload ignores freed blocks and free is absorbing.

\begin{theorem}[Reinstatement lattice]
\label{thm:reinstate}
Let $\gamma_0$ be a block configuration.
\begin{enumerate}[label=\textnormal{(\arabic*)},leftmargin=2.4em,itemsep=1pt]
\item \textnormal{(Recoverable regime.)} If $\gamma_0$ has no freed block and no schedule uses $\mathsf{free}$, the reachable configurations are exactly $\{\textsf{res},\textsf{rec}\}^{\Pi}$, a Boolean lattice under the blockwise order $\textsf{res}<\textsf{rec}$. Full residency is reached by $\mathsf{reload}(\Pi)$, and reachability of a block-aligned target configuration is decidable in $O(|\Pi|)$.
\item \textnormal{(Irrecoverable regime.)} The set of freed blocks is nondecreasing along every schedule and equals the initial freed set union the cumulative free domain. A block is reloadable iff it is currently recoverable, and entry into the freed set is permanent.
\end{enumerate}
\end{theorem}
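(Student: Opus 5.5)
The plan is to treat both parts as statements about a finite transition system on $\{\textsf{res},\textsf{rec},\textsf{free}\}^{\Pi}$. Every operation acts blockwise and independently on the blocks of its argument $Q$, so it suffices to analyze the per-block dynamics and then take products.

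\textbf{Part (1).} First, an invariant. Without $\mathsf{free}$, demotion and reload map $\{\textsf{res},\textsf{rec}\}$ into itself. By induction on schedule length, every reachable configuration therefore lies in $\{\textsf{res},\textsf{rec}\}^{\Pi}$.

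Second, surjectivity. Given a target $\gamma^\ast$ in that set, I apply $\mathsf{reload}(\Pi)$ from $\gamma_0$. This reaches full residency and also proves the second claim of (1). I then apply $\mathsf{demote}(\{b:\gamma^\ast(b)=\textsf{rec}\})$, which reaches $\gamma^\ast$ in exactly two steps.

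The lattice claim is that $\{\textsf{res},\textsf{rec}\}^{\Pi}$ with the blockwise order is the product of $|\Pi|$ two-element chains. It is therefore Boolean, isomorphic to $2^{\Pi}$ via $\gamma\mapsto\gamma^{-1}(\textsf{rec})$.

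For decidability, a block-aligned target is reachable iff it contains no $\textsf{free}$ entry. One scan of $\Pi$ checks this in $O(|\Pi|)$ time.

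\textbf{Part (2).} Let $\Phi_t=\gamma_t^{-1}(\textsf{free})$. Free is absorbing, reload ignores freed blocks, and demotion only acts on $\textsf{res}$. Hence no operation maps $\textsf{free}$ to anything else, so $\Phi_t\subseteq\Phi_{t+1}$. Moreover, a block enters $\Phi$ only through a $\mathsf{free}(Q)$ step with $b\in Q$, and every block in such a $Q$ becomes free. By induction, $\Phi_t=\Phi_0\cup\bigcup_{s<t,\ \text{op}_s=\mathsf{free}(Q_s)}Q_s$, which is the cumulative free domain.

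Reloadability is read off the reload rule. $\mathsf{reload}$ changes $b$ to $\textsf{res}$ exactly when $\gamma(b)=\textsf{rec}$: a resident block is already resident, and a freed block is ignored. Permanence of entry into the freed set is the monotonicity just shown.

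\textbf{Main obstacle.} No step is deep. The only delicate point is in (1): pinning down what "reachable" and "decidable" refer to. The statement quantifies over schedules from a fixed $\gamma_0$, so I must justify that every configuration in the Boolean cube is reachable regardless of $\gamma_0$. This is why I route through $\mathsf{reload}(\Pi)$ first rather than demoting directly. I also need to make explicit that the order $\textsf{res}<\textsf{rec}$ is a choice of presentation, not an order respected by the dynamics; the dynamics move in both directions.
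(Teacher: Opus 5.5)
Your proof is correct and follows the paper's argument. Part (1) rests on the blockwise independence of the two-state $\textsf{res}/\textsf{rec}$ automaton, and part (2) on the fact that no transition leaves $\textsf{free}$ while only $\mathsf{free}(Q)$ enters it; your explicit two-step schedule $\mathsf{reload}(\Pi)$ followed by $\mathsf{demote}(\{b:\gamma^\ast(b)=\textsf{rec}\})$ and your one-scan decision rule (no $\textsf{free}$ entry) spell out what the paper leaves implicit.
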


\begin{proof}
(1) With no free operation, each block automaton has both transitions between $\textsf{res}$ and $\textsf{rec}$, independently of the other blocks. Hence the reachable set is the stated product lattice, and one reload over all blocks reaches full residency. (2) Free only adds blocks to the freed set and no transition removes them. Thus after a schedule with cumulative free domain $D$, the freed set is $\operatorname{Free}(\gamma_0)\cup D$, and every freed block remains fixed.
\end{proof}

The recoverable regime keeps full block residency within reach; an irreversible release removes that block from every future resident configuration.  Expressing the automaton over $\Pi$, rather than over individual locations, preserves alignment by construction.

\section{Discussion}
\label{sec:summary}

The static decision contract is exact.  For one candidate, two reachability forests and their closed boundaries certify the demand removed by that block.  For a two-hop network, the same quantity reduces to a streaming sufficient statistic and all candidates can be accumulated together.  These are different complexity statements and should not be merged: the accelerated result does not imply a one-pass all-candidate algorithm for arbitrary directed graphs.

The replay record also has a precise scope.  It is lossless for the generated packet stream and for metrics that factor through the stored monoid summary.  It does not make wall-clock timing deterministic; timing and device state are measured observations and are retained separately as result records and telemetry.  The two fingerprints detect accidental disagreement but are not used as a proof of equality.  Equality of the reported impact metrics is checked against the complete stored vectors and conservation identities.

\paragraph{Scope and limits.} The $k$-competitive statement is confined to the block-reference paging specialization.  Path admission, weighted retrieval, and fully dynamic directed reachability require additional algorithms; this paper claims re-traversal for a general fixed candidate and a one-pass kernel only for the two-hop case.  The hardness theorem concerns unweighted block actions with protected terminals.  The experimental networks are synthetic and deliberately controlled.  In particular, the rows labelled LRU, FIFO, and random are topology-independent total orders, not cache policies replayed against a deployment trace.  Their ratios quantify the value of topology information under the stated generator; they do not estimate a universal deployment penalty for LRU or FIFO.  Warp aggregation is likewise an exact transformation rather than an empirical speed contribution in this regime: the measured reduction is only 0.124\% on average, whereas the separate 32-bank crossover accounts for the 5.54\% pilot gain.  Fair asynchronous gossip has no finite convergence-time bound; the diameter guarantee applies only to synchronous full-edge rounds.

\paragraph{Open empirical questions.}
The controlled separation identifies three measurements needed from operational networks.  First, how strongly are recency and insertion history correlated with service-cut impact, and when does that correlation make a topology-oblivious order competitive in practice?  Second, what target concentration or namespace size makes warp grouping worthwhile, and can a runtime choose among lane grouping, bank sharding, and direct atomics from an online collision estimate?  Third, does the zero-impact threshold of Proposition~\ref{prop:zero-threshold} persist under bursty, dependent traffic, or does an extremal-index correction replace the independent-packet law?  Answering these questions requires trace-linked topology and residency snapshots; the present artifact supplies the exact counters and verifier against which such traces can be compared.

\section{Conclusion}
\label{sec:conclusion}

Aligned eviction is a graph decision, not merely a recency decision.  The paper supplies an independently checkable impact certificate, separates polynomial location cuts from hard block-action minimization, and gives a one-pass data structure for the two-hop case used in the accelerated study.  Counter-based replay and exact additive summaries decouple experimental scale from archive size without discarding the candidate vectors, samples, logs, or telemetry needed for audit.  The CRDT formulation handles invalidation order, while the state model keeps recoverable demotion distinct from irreversible release.

\appendix

\section{Accelerated streaming algorithm and audit artifact}
\label{app:artifact}

\subsection{Exact-stratified residency}

\medskip
\noindent\textbf{Algorithm A.1: constant-state residency membership.}
\begin{enumerate}[label=\arabic*:,leftmargin=3.2em,itemsep=1pt]
\item Split the $B$ blocks into hot and cold classes of size $H=B/2$, where $H$ is a power of two.
\item For each class derive an odd multiplier $a$ and offset $c$ from the scenario seed. Store these four words and $k=\operatorname{round}(qH)$.
\item For queried block $b$, compute its class and local index $x$, then rank $y=(ax+c)\mathbin{\&}(H-1)$.
\item Return resident iff $y<k$.  Because $a$ is odd, the rank map is a permutation and each class contains exactly $k$ resident blocks.
\end{enumerate}

\subsection{Streaming accumulator}

The implementation uses 64-bit unsigned counters.  The verifier checks
\[
N\le\left\lfloor\frac{2^{64}-1}{1500}\right\rfloor,
\qquad
64N\le B_{\rm total}\le1500N,
\]
so neither a packet counter nor a byte counter can overflow under the encoded packet-size range.  It also checks that no per-block impact exceeds the corresponding stream total. Parent identifiers are deduplicated before the resident-parent count is taken; without this step two equal route draws would be mistaken for path redundancy.

\medskip
\noindent\textbf{Algorithm A.2: one scenario on one device.}
\begin{enumerate}[label=\arabic*:,leftmargin=3.2em,itemsep=1pt]
\item Zero the 32 packet-impact banks, 32 byte-impact banks, service-state totals, and replay fingerprints.  At the maximum $|\Pi|=65{,}536$, the two bank arrays occupy exactly $2\cdot65{,}536\cdot32\cdot8=32$~MiB.
\item Until the five-minute deadline, launch a counter kernel over the 41~GiB buffer.  At position $j$, write $g_{\theta,s}(a+j)$ and advance the global counter interval after the pass.
\item In parallel for each generated packet, derive at most $r$ parent blocks, remove duplicate parents, and evaluate the deterministic residency predicate for each distinct parent.
\item If no parent is resident, increment the unserved counter.  If exactly one parent $b$ is resident, group warp lanes by $b$, map the warp to one of 32 banks, then let one lane atomically add the group size and group byte sum to that bank of $b$.  Otherwise increment the multiply-served counter.
\item Fold the packet count, byte count, and two 64-bit fingerprints into the scenario summary using atomic commutative operations.
\item Synchronize once per buffer pass.  At the deadline, reduce each block's 32 banks in a deterministic kernel, then copy the two impact vectors and scalar summary to the host.
\item Among resident blocks, select the minimum packet-impact victim.  Select LRU, FIFO, and random victims from topology-independent deterministic metadata.
\item Append one JSON result, one binary detail frame, and 2,048 evenly spaced decoded audit samples; flush both files before starting the next scenario.
\end{enumerate}

\subsection{Canonical binary detail frames}

Each device first writes one canonical little-endian detail stream.  Keeping this measurement path uncompressed isolates the timed kernel from the artifact codec.  The file begins with the eight-byte magic string \texttt{AEIMPACT}, a schema number, an endian marker, and the audit-sample count.  Every schema-3 scenario frame then stores a tag; scenario and repetition identifiers; block count, redundancy, and occupancy; seed and half-open counter interval; packet and byte totals; both fingerprints; the exactly-one-parent byte total; the number of warp-aggregated impact groups; the complete packet-impact vector; the complete byte-impact vector; and 2,048 audit triples.  An audit triple contains the counter, generated 64-bit event, and packed decoded metadata (packet bytes, live-parent count, and the sole parent identifier or a sentinel when that count is not one).

The frame size is exactly
\[
4+5(4)+9(8)+16|\Pi|+2048(24)
=49{,}248+16|\Pi|\ \text{bytes}.
\]
Across four replicates of the $3\times4\times3$ matrix, the detail streams therefore occupy
\[
144(49{,}248)+16\cdot48(4096+16384+65536)
=73{,}152{,}000\ \text{bytes},
\]
plus three 24-byte file headers.  This bound includes every candidate impact and all 294,912 decoded audit records before lossless coding.  The 50,000,000-byte ancillary budget is checked over the implementation, logs, calibration checkpoint, production vectors, telemetry, and validation output together.

\subsection{Reconstructible lossless artifact codec}

The production streams also have a compact representation, identified by the eight-byte magic string \texttt{AECODEC1}.  The transformation is performed after measurement and is exactly reversible to the schema-3 frame.  It exploits two invariants already checked by the verifier rather than discarding observations.

\paragraph{Residency-conditioned vector coding.}
For either impact vector $x\in\{0,\ldots,2^{64}-1\}^{|\Pi|}$, every nonresident coordinate is zero.  In each of the hot and cold classes, let $m$ be the lower median of the resident coordinates.  Resident coordinates are visited in increasing block order and represented by
\[
d_b=x_b-m,\qquad
\operatorname{zz}(d)=
\begin{cases}
2d,&d\ge0,\\
-2d-1,&d<0,
\end{cases}
\]
followed by unsigned LEB128 coding of $\operatorname{zz}(d_b)$.  The two medians are stored as 64-bit values.  A high bit in each payload-length field selects a fixed-width fallback: if the residual payload is not shorter than the original vector, the codec stores that vector verbatim.  Thus the choice is made independently for packet and byte impacts and cannot expand either vector payload.

\paragraph{Seed-indexed audit coding.}
For sample index $j\in\{0,\ldots,K-1\}$, $K=2048$, the canonical counter is
\[
c_j=a+\left\lfloor
\frac{(2j+1)(b-a)}{2K}
\right\rfloor
\]
for scenario interval $[a,b)$, and the event is $g_{\theta,s}(c_j)$.  The codec therefore stores only the eight-byte packed metadata for each sample.  The decoder reconstructs both the counter and event exactly from the frame fields and restores the original 24-byte audit triple.

\begin{theorem}[Byte-exact decodability]
\label{thm:artifact-codec}
For every valid schema-3 frame, the artifact codec reconstructs the complete packet-impact vector, byte-impact vector, and every audit triple byte for byte.  The encoding is injective on valid frames.  Its fixed per-frame cost is
\[
16{,}552+L_{\rm pkt}+L_{\rm byte}\ \text{bytes},
\]
where each $L$ is the smaller of its residual payload and the corresponding $8|\Pi|$-byte fixed-width vector.  Consequently a coded frame is at most
\[
16{,}552+16|\Pi|
\]
bytes, saving at least 32,696 bytes relative to its canonical schema-3 frame.  This guarantee is independent of outer ZIP compression.
\end{theorem}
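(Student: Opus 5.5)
I would prove the statement by exhibiting an explicit decoder and showing that it inverts the encoder field by field on every valid schema-3 frame. Injectivity then follows at once, because an encoding with a left inverse is injective. The size bound comes from listing the fixed fields of the coded frame and comparing them with the canonical layout $49{,}248+16|\Pi|$.

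The frame splits into three parts, each decoded separately.

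First, the scalar header fields (tag, identifiers, block count, redundancy, occupancy, seed, interval, totals, fingerprints, group count) are copied verbatim. They decode trivially.

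Second, each impact vector is decoded as follows.
\begin{itemize}
\item The decoder reads the block count and occupancy $q$. It recomputes the resident set from the seed via Algorithm A.1, which is a deterministic function of stored fields by Theorem~\ref{thm:exact-residency}.
\item It then reads the high bit of the payload-length field. If the bit selects fixed width, the $8|\Pi|$ bytes are the vector itself.
\item Otherwise it writes zero at every nonresident coordinate. Validity of the frame guarantees these coordinates are zero, since the verifier checks this invariant.
\item For each class it reads the stored median $m$. It then parses LEB128 integers in increasing resident-block order, applies $\operatorname{zz}^{-1}$, and adds $m$.
\end{itemize}
This step is exact because $\operatorname{zz}$ is a bijection $\mathbb Z\to\mathbb N$, LEB128 is a prefix code so parsing is unambiguous given the stored length, and the resident order is recomputable. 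One point needs care: $d_b=x_b-m$ must be formed in signed arithmetic of sufficient width, about 65 bits, so that $x_b\in[0,2^{64})$ is recovered without wraparound. I would state that the encoder uses this width, or that the range check of the verifier bounds $x_b\le 1500N<2^{63}$, which makes 64-bit signed arithmetic sufficient.

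Third, each audit triple is decoded as follows.
\begin{itemize}
\item The decoder computes $c_j$ from the stored interval $[a,b)$ by the displayed formula.
\item It evaluates $g_{\theta,s}(c_j)$, which is deterministic by Theorem~\ref{thm:replay}.
\item It attaches the stored eight-byte packed metadata.
\end{itemize}
This reproduces the 24-byte triple, because a valid frame's audit samples are by definition taken at these canonical counters with these events.

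The step I expect to be the main obstacle is making precise which invariants "valid frame" supplies. These are the zero nonresident coordinates, the canonical sample counters, the events equal to the generator output, and the overflow bounds. Decodability depends on all of them, so I would fix this definition explicitly before the accounting.

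For the accounting, I would tally the coded fixed fields into the constant $16{,}552$: header fields, medians, length fields, and $2048\cdot 8=16{,}384$ metadata bytes. Each $L$ is at most $8|\Pi|$ by the fallback, so a coded frame is at most $16{,}552+16|\Pi|$ bytes. Subtracting this from the canonical $49{,}248+16|\Pi|$ gives a saving of at least $32{,}696$ bytes, which is mostly the $2048\cdot16$ bytes of dropped counters and events. None of this argument involves outer ZIP compression.
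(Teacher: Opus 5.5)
Your decoding and injectivity argument is essentially the paper's, and it is correct. The seed and the exact-stratified predicate fix the resident coordinates, and the nonresident ones are zero. ZigZag is a bijection and ULEB128 is prefix-decodable, so adding back the class median inverts the residual transform, and the fixed-width mode is the identity. The two omitted audit fields are regenerated from the sample-counter formula and the generator. Hence the decoder is a left inverse on valid frames. Listing the invariants that ``valid'' must supply sharpens what the paper leaves implicit; it attributes the zero coordinates to the one-pass update rule.

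One caution about your alternative justification $x_b\le 1500N<2^{63}$. It does not follow from the verifier's check $N\le\lfloor(2^{64}-1)/1500\rfloor$, which only gives $1500N\le 2^{64}-1$. Keep your first option: form $d_b=x_b-m$ over the integers, or in at least 65-bit signed arithmetic. That is what the paper's use of ZigZag on signed integers implicitly does.

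The gap is in the byte accounting. You assert the tally rather than carry it out, and your list of fixed fields is incomplete. The constant is $16{,}552=96+40+16{,}384+32$:
\begin{itemize}
\item $96$ bytes are the canonical scalar fields $4+5(4)+9(8)$;
\item $40$ bytes are the median predictors and payload-length fields;
\item $16{,}384=2048\cdot 8$ bytes are the packed sample metadata;
\item $32$ bytes are the SHA-256 digest of the reconstructed canonical frame, which every coded frame carries.
\end{itemize}
Your list omits the digest, so your tally would stop at $16{,}520$ and the saving would come out as $32{,}728$ instead of $32{,}696$. With the digest included, the saving is exactly $49{,}248-16{,}552=32{,}696=2048\cdot 16-40-32$. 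That is the $2048\cdot16$ bytes of dropped counters and events minus the $72$ bytes of predictor, length, and digest overhead, which makes precise your remark that the saving is ``mostly'' $2048\cdot 16$ bytes.
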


\begin{proof}
The seed and exact-stratified predicate identify every resident coordinate; all remaining coordinates are zero by the one-pass update rule.  ULEB128 is prefix-decodable, ZigZag is a bijection from signed integers to nonnegative integers, and adding the stored class median inverts the residual transform.  The fixed-width mode is the identity.  The displayed sample formula and deterministic generator reconstruct the two omitted audit fields, while packed metadata is stored verbatim.  Hence the decoder recovers the canonical field sequence uniquely.  The coded fixed fields occupy 96 bytes, predictor and length fields 40 bytes, sample metadata $2048\cdot8=16{,}384$ bytes, and the frame digest 32 bytes, totaling 16,552 bytes before vector payloads.  Each adaptive payload is no larger than $8|\Pi|$, which proves the bound and the 32,696-byte saving against $49{,}248+16|\Pi|$.
\end{proof}

Each coded frame carries SHA-256 of the reconstructed canonical frame.  This digest is an error detector, not the reason the transform is lossless: injectivity follows from the explicit inverse above.  The independent decoder first reconstructs schema-3 bytes and checks the digest, then applies the semantic conservation, victim, residency, and replay checks.

\paragraph{Pre-production size pilot.}
On the 36-configuration one-second matrix, the codec reduced the canonical detail stream from 18,288,024 to 3,643,748 bytes, a factor of 5.02, while the independent analyzer emitted an identical summary CSV.  Deflate level 9 reduced the canonical and coded streams to 4,775,820 and 2,931,493 bytes, respectively; the codec therefore retained a further factor of 1.63 after outer compression.  These figures characterize the artifact representation and are not used as performance results.

\subsection{Independent validation}

\medskip
\noindent\textbf{Algorithm A.3: artifact verifier.}
\begin{enumerate}[label=\arabic*:,leftmargin=3.2em,itemsep=1pt]
\item Parse all JSON records and coded binary frames; reconstruct canonical schema-3 frames, check each SHA-256 digest, and reject duplicate or missing $(\text{scenario},\text{repetition})$ keys and malformed counter intervals.
\item Match every scalar field and fingerprint in JSON against its binary frame.
\item Verify the conservation identity
\[
\sum_{b\in\Pi} I_{\rm packet}(b)
=N_{\rm one},
\qquad
\sum_{b\in\Pi} I_{\rm bytes}(b)
=B_{\rm one},
\]
because every packet with exactly one resident parent contributes its packet and byte weight to exactly one block and every other packet contributes to none.
\item Recompute the resident predicate for every block, take the minimum stored impact over resident candidates, and compare it with the reported impact-priced victim.
\item Look up each reported LRU, FIFO, and random victim in both stored impact vectors and compare its packet and byte values with JSON.
\item For every audit sample, regenerate the event from its seed and counter, reconstruct distinct parents, residency, packet size, live-parent count, and sole parent, then compare the packed metadata bit for bit.
\item Aggregate repetitions only after all checks pass.  Emit the CSV used by Table~\ref{tab:gpu} and a machine-readable validation report.
\end{enumerate}

\subsection{Deviation-envelope extraction}

The verifier also emits \texttt{deviation\_summary.csv} and \texttt{deviation\_report.json}.  These files are derived from the same reconstructed impact vectors used for victim validation; no additional measurement path is introduced.  The purpose is to separate four effects that are otherwise easy to confuse: resident-cardinality rounding in Theorem~\ref{thm:exact-residency}, finite-stream route variation around Proposition~\ref{prop:expected-impact}, extreme-value collapse of the minimum in Proposition~\ref{prop:zero-threshold}, and genuine policy damage relative to the impact-priced victim.

\medskip
\noindent\textbf{Algorithm A.4: deviation-envelope verifier.}
\begin{enumerate}[label=\arabic*:,leftmargin=3.2em,itemsep=1pt]
\item For each validated frame, recompute the exact resident set and resident probability mass $R$ from the stored seed, block count, and occupancy.
\item For every resident block $b$, compute $\zeta_b=(1-R+\pi_b)^r-(1-R)^r$ and residual $e_b=X_b-N\zeta_b$ from the stored packet-impact vector.
\item Aggregate residuals separately for the hot and cold classes.  For each class write the mean observed ppb, theory ppb, signed relative error, mean and maximum absolute relative error, and the envelope $\Delta_C=\max |e_b|$ both in packets and ppb.
\item Also write normalized coordinates $e_b/\sqrt{N\zeta_b(1-\zeta_b)}$ when $\zeta_b>0$.  These values are diagnostics for finite-count scale; the artifact treats the exact residual envelope, not a distributional assumption, as the certificate.
\item Count zero-impact coordinates in each class and compute $\mu_C=m_C(1-\zeta_C)^N$, the probability bracket in Proposition~\ref{prop:zero-threshold}, and the threshold margin $N\zeta_C-\log m_C$.
\item Record the realized resident fraction and its deviation from the target $q$.  Because each class rounds independently, the total resident-count error is at most one block and the fraction error is at most $1/|\Pi|$.
\item Emit a compact JSON summary containing the worst relative coordinate error, worst normalized residual, total resident-coordinate count, zero-minimum transition, maximum resident-cardinality rounding error, and maximum class envelope.  The generated findings paragraph reads this JSON after the full 144-scenario matrix validates.
\end{enumerate}

\subsection{Ancillary evidence}

The ancillary files contain the implementation, reconstructible exact impact vectors, audit-sample metadata, telemetry, build and progress logs, manifests, validation output, and deviation summaries.  Production detail streams use the byte-exact \texttt{AECODEC1} representation; the Bernoulli calibration checkpoint remains in its original schema.  The packet stream itself is represented by generator parameters and counter intervals as specified by Theorem~\ref{thm:replay}.  Validation reconstructs every canonical frame, checks its digest and semantic invariants, and verifies the recorded file hashes.

\end{document}